\documentclass[preprint,12pt]{elsarticle}

\usepackage{amsmath,amsfonts,amsthm,amssymb,paralist,subfigure,graphicx,amsbsy,float,epsfig,color,xcolor}
\usepackage{cuted,mathtools,lipsum}

\usepackage{soul}
\usepackage{algorithm,algpseudocode}
\usepackage{stfloats}
\usepackage[toc]{appendix}
\usepackage{blkarray}
\usepackage{bbm}
\usepackage{bm}
\usepackage[makeroom]{cancel}
\usepackage{cases}
\usepackage{changebar}
\usepackage{epsfig}
\usepackage{graphicx}
\usepackage{flushend}
\usepackage{textcomp}
\usepackage{xcolor}
\usepackage{flushend}
\usepackage{graphicx}
\usepackage{mathtools}
\usepackage{mathrsfs}
\usepackage{setspace}
\usepackage{soul}
\usepackage{subfigure}
\usepackage{stfloats}
\usepackage{tikz}
\usepackage{xspace}
\usepackage{cuted,mathtools,lipsum}
\usepackage{pgfplots}

\usepackage{stmaryrd,mathrsfs,url}
\usepackage{pifont}
\usepackage{hyperref}       
\hypersetup{
	colorlinks=true,
	linkcolor=cyan,
	filecolor=mnodea,
	urlcolor=cyan,
	citecolor=lime,
}

\usepackage{url}            
\usepackage{booktabs}       
\usepackage{nicefrac}       
\usepackage{lipsum}
\usepackage{dsfont}

\newtheorem{thm}{Theorem}
\newtheorem{lem}{Lemma}
\newtheorem{ass}{Assumption}
\newtheorem{definition}{Definition}
\newtheorem{remark}{Remark}

\def\mb{\mathbf}

\def\mc{\mathcal}

\journal{Systems \& Control Letters}

\begin{document}

\begin{frontmatter}
\title{Resilient Monitoring of Social Dynamical Systems through Collaborative Multi-Agent Networks under Latency
}	

\author[MD]{Mohammadreza Doostmohammadian}\ead{doost@semnan.ac.ir},
\author[SP1,SP2]{Sergio Pequito}\ead{sergio.pequito@tecnico.ulisboa.pt}

\address[MD]{Mechatronics Department, Faculty of Mechanical Engineering, Semnan University, Iran}

\address[SP1]{Dept. of Electrical and Computer Engineering, Instituto Superior T\'ecnico, University of Lisbon, Portugal}

\address[SP2]{Institute for Systems and Robotics, Instituto Superior T\'ecnico, University of Lisbon, Portugal}

\begin{abstract}
	Social dynamical networks significantly influence contemporary digital landscapes, affecting realms from social activism to public policy formulation. This paper investigates the use of multi-agent systems (MAS) to monitor and analyze these networks. Firstly, we propose a single-time-scale distributed inference model designed to effectively manage challenges such as latency and agent failure. Secondly, we provide sufficient conditions that ensure the stability of the proposed scheme. Notably, the observer gain design remains effective regardless of time delays. Thirdly, we develop a computationally efficient recovery mechanism for agent failures that relies on employing graph-theoretic approaches to restore network observability by replacing failed agents (due to sensing failure or unbounded delays, i.e., packet drops) by implementing computationally efficient graph-theoretic methods to assign observationally equivalent agent counterparts. Lastly, we illustrate the proposed scheme through a pedagogical example and real-world network applications.
\end{abstract}

\begin{graphicalabstract}
	\includegraphics{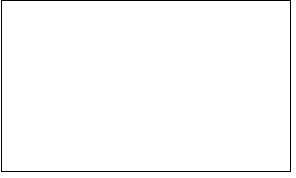}
\end{graphicalabstract}

\begin{highlights}
	\item Clearly defining the formal mathematical framework for monitoring of social dynamical systems 
	\item Addressing distributed social inference under latency
	\item Developing graph-theoretic recovery mechanisms for agent failures 
\end{highlights}

\begin{keyword}
	Distributed inference; graph theory; social dynamics; consensus
\end{keyword}

\end{frontmatter}

\section{Introduction} \label{sec_intro}
Social dynamical networks are increasingly integral to our digital world \cite{proskurnikov2017tutorial}. The evolution of these networks can profoundly impact social activism, potentially disrupting core societal functions. This can occur proactively or in response to natural disasters like hurricanes or earthquakes. Therefore, monitoring these networks to develop strategies that can alleviate negative social impacts is crucial. Additionally, these networks can facilitate the provision of essential services to a broader audience, help formulate public policies that reflect the population's concerns, and enhance digital marketing efforts. 

A possible solution may rely on multi-agent systems (MAS) to effectively monitor the dynamics of social networks. In these systems, agents collect localized information about the network and share it with other agents. This collaborative process enables the agents to infer the overall state of the social network, ensuring a comprehensive monitoring platform. The setting enabled by MAS promotes decentralized methods for learning over social networks \cite{kayaalp2024social,Cirillo2023Memory,salhab2020social,rahimian2016group}, which contrasts with a centralized scheme where information is routed towards a central fusion station for inference. In decentralized methods, inference can be achieved by establishing consensus among various snapshots of states measured by different agents. This process significantly increases communication between agents in the MAS and can be further complicated by the size of the states being exchanged \cite{he2019secure,battilotti2021stability,mo2020distributed,qian2022consensus,ryu2023consensus,spl24}.

Alternatively, there is the option of \emph{single-time-scale distributed inference} (i.e., without an inner consensus loop between snapshots and with only a single step of averaging/consensus). This includes diffusion-based strategies \cite{chen2012diffusion,bordignon2022partial}, consensus-based observers \cite{7463019}, Byzantine-resilient methods \cite{mitra2019byzantine}, and approaches resilient to sensor attacks \cite{chen2018resilient}, which help mitigate link failures \cite{kar2009distributed} and are robust against limited communications \cite{khan2014collaborative}. Additional research focuses on distributed estimation schemes that are robust under time-varying network topologies \cite{mitra2021distributed,paritosh2022distributed} and consensus in the presence of delays \cite{Themis_delay}. For a review of state-of-the-art distributed estimation methods, refer to \cite{he2020distributed}.

Unfortunately, real-world applications of MAS often encounter two significant challenges. Firstly, an agent may fail due to unbounded delays and packet loss, despite extensive efforts to design robust and resilient networks--a recent survey and its references highlight these issues \cite{pirani2023graph}. Secondly, delayed data exchanges between agents, often called latency, are common in networked control systems but have been largely overlooked in the existing literature on distributed algorithms \cite{chen2012diffusion,bordignon2022partial,7463019,mitra2019byzantine,khan2014collaborative,mitra2021distributed,paritosh2022distributed,pirani2023graph,asilomar11}. Particularly, the design of delay-tolerant collaborative algorithms along with addressing agent failure recovery advances the novelty as compared to more closely related works (e.g., \cite{7463019,khan2014collaborative,pirani2023graph}). Although the works \cite{7463019,khan2014collaborative,pirani2023graph}  address decentralized solutions for tracking and estimation, are vulnerable to possible time-delays in the data-sharing network of agents and provide no solution for observationally-equivalent recovery in case of agent failure.
To the best of the authors' knowledge, there remains a notable gap concerning decentralized schemes that effectively track potentially unstable social dynamics in the presence of bounded heterogeneous time delays in the possible presence of faulty agents.

In this paper, we propose a single-time-scale distributed inference scheme for potentially unstable social dynamical networks that may be prone to sensing loss and latency across a multi-agent network due to faults, congestion, or environmental conditions. The main contributions of our work are threefold: 

\emph{(i)} We introduce a single-time-scale distributed inference protocol based on the Kronecker composite network product and distributed observability. This protocol combines consensus on a-priori estimates with an observation update that accommodates potentially heterogeneous bounded time delays. 

\emph{(ii)} We establish sufficient conditions for the stability of this scheme. Notably, we demonstrate that the design of the observer gain remains effective irrespective of the time delays.

\emph{(iii)} In the event of agent failures, we propose a recovery mechanism using computationally efficient graph-theoretic methods. In a distributed sense, this strategy restores observability by substituting the failed agents with observationally equivalent ones. 

\textit{Paper Organization:} The rest of the paper is organized as follows. Section~\ref{sec_probStat} formulates the problem. Section~\ref{sec_mainResults} presents the main results on resilient-to-delay distributed observer design. Section~\ref{sec_sim} provides the simulations and Section~\ref{sec_con} concludes the paper.

\textit{Notations:} Capital letters are used to represent matrices, and bold small letters represent vectors. $I_N$ and $0_N$ denote the identity and all-zero matrices of size $N$. The operator $\otimes$ denotes the Kronecker product. The operator ``$;$" denotes the column concatenation of vectors. The operator $\succ$ denotes positive-definiteness. Table~\ref{tab_notation} summarizes the notations in this paper:
\begin{table}
		\caption{Description of notations and symbols}
		\setlength{\tabcolsep}{0.7\tabcolsep}
		\centering \label{tab_notation}
		\begin{tabular}{ *{2}{c} }
			\hline
			\hline
			\textbf{Symbol} & \textbf{Description} \\
			\hline
			$\mb{x}_k,x_k^i$ & social state (at individual $i$)\\
			$k$ & time index \\
			$A$ & social dynamics matrix   \\ 
			$\mb{y}_k,\mb{y}_k^i$ & agent observation (at node $i$)  \\ 
			$H,H_i$ & observation matrix (at node $i$) \\
			$\nu_k,\zeta_k$ & system and output noise\\
			$Q,R$ & system and output noise covariance\\
			$n$ & number of social individuals  \\
			$N$ & number of agents  \\
			$\mathcal{G}_W$ & multi-agent network  \\
			$\mathcal{G}_A$ & social network\\
			$W$ & multi-agent weight matrix \\
			$\widehat{\mb{x}}_k$ & observed/estimated state\\
			$\tau_{ij}$ & time-delay at link $(j,i)$  \\
			$\overline{\tau}$ &  maximum time-delay  \\
			$K$ & feedback gain matrix \\
			$\widehat{A},\underline{\widehat{A}}$ &  closed-loop error matrix  \\
			$\mb{e}_k,\underline{\mb{e}}_k $ & estimation error \\
			$D_H,\overline{D}_H$ & block-diagonal observation matrix  \\
			$\rho(\cdot)$ & spectral radius operator    \\						
			\hline \hline
	\end{tabular}
\end{table}

\section{Problem Formulation} \label{sec_probStat}
\subsection{System Description}
	We consider a social dynamical network monitored by a multi-agent system with the following components: social dynamics, multi-agent observations, and communication network. These components are illustrated in Fig.~\ref{fig_multiagent} and described in detail below.
\begin{figure} 
	\centering
	\includegraphics[width=3.3in]{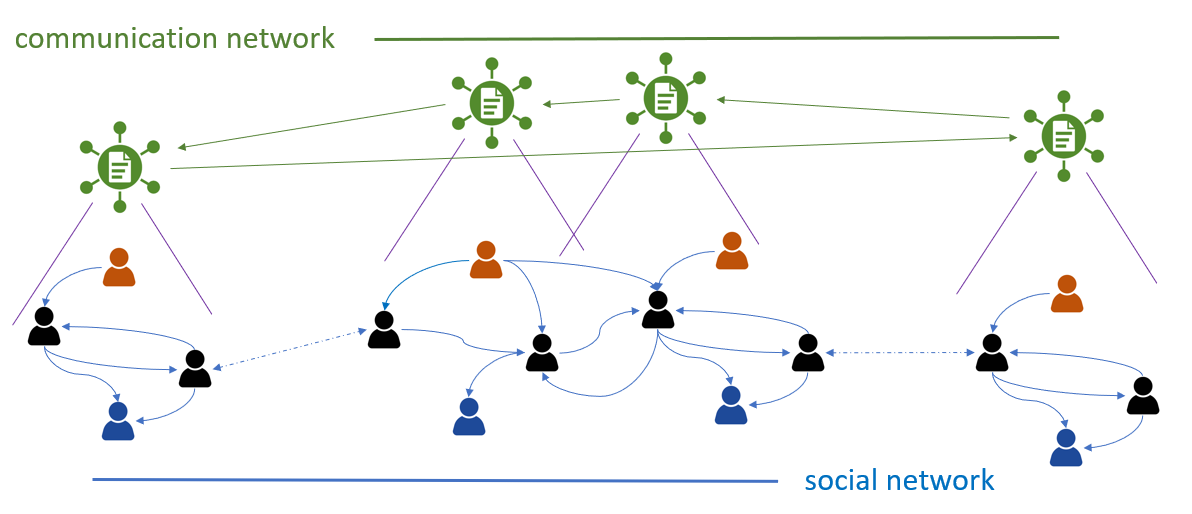}
	\caption{This figure illustrates a social network monitored by a multi-agent network. The social network has the interactions between individuals depicted by blue directed links, and where some individuals might be just influencers (in red) and others just followers (in blue). Additionally, agents depicted in green harvest data from select individuals enclosed by purple boundaries. These collectors then exchange gathered data through a dedicated communication network, illustrated by green lines.} \label{fig_multiagent}
\end{figure}  

\textbf{Social Dynamics:}
	This is characterized by the interaction and exchange of data between individuals over time. Each individual's state (e.g., an opinion or belief) is updated by their previous state and others' every time data is received. A possible model, often used to characterize such updates between $n$ individuals, is as follows~ \cite{khan2014collaborative,jia2015opinion}:
	\begin{align}\label{eq_sys1}
		\mb{x}_{k+1} = A\mb{x}_k + \nu_k,\qquad k\geq 0,
	\end{align}
	where the vector $\mb{x}_k \in \mathbb{R}^n$ encapsulates the \emph{social state} at time-step $k\in\mathbb{N}$  comprising the $i$th individuals' scalar states $x_k^{i}$, i.e., $\mb{x}_{k} = \left({x}_{k}^1; \ldots;{x}_{k}^n \right)$, and $\boldsymbol{\nu}_k \in \mathbb{R}^n$ represents the process noise with covariance $Q$, which may include possible rounding errors in the data storage and communication, modelled as standard Gaussian noise. Often in the literature, social interactions effectively are modelled using graph theory, where linear systems are prevalent. For example, if nodes (individuals) interact with weights (influence), linear relationships can capture the essence of how influence spreads through the social network. As an example, refer to the Freidkin and Johnsen model \cite{Friedkin0,Friedkin} or the French model \cite{French}. This linear framework serves well as a baseline model to identify key relationships and dynamics. Once established, it can be expanded to incorporate nonlinear dynamics if needed. 

The matrix~$A\in\mathbb{R}^n$ denotes the \emph{social system matrix} with the scalar entries $a_{ij}$ describing the weight that individual $i$ sets to incorporate the incoming state data (i.e., opinion, belief) of individual $j$. Therefore, this matrix describes the influence that an individual exerts on others -- either directly or indirectly through third parties. 
Hereafter, the only assumption we make on this matrix is as follows:
\begin{ass} \label{ass_fullRank}
	The social system matrix $A$ is \mbox{full-rank}.
\end{ass}
Notice that this assumption is mild as individuals update their state using their own weight $a_{ii}\neq 0$, and the remaining weights are arbitrarily chosen by the individuals; hence, $A$  is \textit{almost-surely} full-rank in the generic sense, see~\cite[Section~2.1]{RAMOS2022110229} and~\cite[Supplementary Information]{Liu-nature}\footnote{
			By continuity, introducing non-zero off-diagonal entries with independently selected values at random following a continuous distribution (e.g., normal distribution), the probability of obtaining a non-invertible matrix is zero. This readily follows from the fact that the determinant can be viewed as a multivariate polynomial function of the matrix entries.
			For a matrix to be non-invertible, its parameters would need to satisfy the specific condition $\text{det}(A) = 0$. In the space of all possible real parameter values, this zero-determinant condition defines a lower-dimensional algebraic variety, which has Lebesgue measure zero in the full parameter space \cite{woude:03,blanchini2008set,harary1967graphs,RAMOS2022110229}. A similar statement holds for the discretization of continuous-time LTI systems \cite{tnse2020}. In another perspective, the full-rankness can be addressed via \textit{maximum-matching} in the bipartite representation of the social graph. It is known that having a maximum-matching spanning all nodes implies that the associated adjacency matrix is structurally full-rank (in generic sense) \cite[Supplementary Information]{Liu-nature}. In this perspective, the self-cycles in social graph associated with $a_{ii}\neq 0$ represent a maximum-matching covering all nodes $i$, which implies $A$ is {almost-surely} full-rank in generic sense. }. 
Note that the nonzero diagonal entry $a_{ii}\neq 0$ assures that each individual node can weight its own past state along with the states of its neighbors, which is the key for modelling complex social opinion dynamics where past states considerably influence the current behaviour.
It is important to emphasize that we do not require any other assumption on the structure of $A$ or stability of $A$. Specifically, Assumption~\ref{ass_fullRank} neither necessitates nor excludes the irreducibility of matrix $A$. This means that the requirement for $A$ to be irreducible is not imposed because some individuals can act solely as followers or influencers. Consequently, these individuals may not transmit or receive data to or from others, respectively. Besides, the social system matrix may be  (potentially) unstable with $\rho({A})>1$.

\textbf{Multi-Agent Observations:}
	Social networks are often monitored by \emph{agents} that may represent several entities (e.g., service providers or social network managers) serving as information gatherers by taking local observations. Specifically, agent $i\in\{1,\ldots,N\}$ performs the following measurement 
	\begin{align} \label{eq_H_i}
		\mb{y}_k^i = H_i\mb{x}_k + \zeta_k^i,
	\end{align}
	where~$\mb{y}^i_k\in\mathbb{R}^{p_i}$ and $H_i\in\mathbb{R}^{p_i\times n}$ are respectively the measurements performed by agent~$i$ at time~$k$ and local observation matrix describing $p_i\in\mathbb{N}$ linear combination of collected data, and $\boldsymbol{\zeta}^i_k \in \mathbb{R}^{p_i}$ is measurement noise due to rounding errors, considered as standard Gaussian noise $\zeta^i_k\in\mathbb{R}^{p_i}$.

	\textbf{Communication Network:}
	The agents seek to reconstruct the social state of the network by exchanging information among other agents through the (multi-agent) network $\mathcal{G}_W=(\mc{V}_W,\mc{E}_W;W)$,
	where:
	\begin{itemize}
		\item $\mathcal{V}_W = \{1, 2, \ldots, N\}$ represents the agent set
		\item $\mathcal{E}_W \subseteq \mathcal{V}_W \times \mathcal{V}_W$ denotes communication links
		\item $W = [w_{ij}] $ is the data-fusion weight matrix with the weight $w_{ji}$ associated with the link $(i,j)$. Particularly, matrix $W$ is row-stochastic implying that $\sum_{j=1}^N w_{ij}=1$.
\end{itemize}

Due to the exchange occurring through different communication channels, in practice,  the dynamics of the \textit{observed} states by the multi-agent network becomes 
\begin{align}\label{eq_p_aug}
	\widehat{\underline{\mb{x}}}_{k+1} =&\overline{W} \widehat{\underline{\mb{x}}}_{k},
\end{align}
where $\widehat{\underline{\mb{x}}}_{k} := \left(\widehat{\mb{x}}_{k}; \widehat{\mb{x}}_{k-1}; \dots; \widehat{\mb{x}}_{k-\overline{\tau}} \right)$ is the augmented delayed observed state at agents, accounting for an upper bound delay of $\overline{\tau}\in\mathbb{N}$. The state $\widehat{\mb{x}}_{k}$ represents the observed (or estimated) social state at agents at time $k$.   
	Additionally,  $\overline{W} $ represents  the dynamics of the multi-agent network under latency defined as
\begin{align}\label{eq_aug_W}
		\overline{W}:= \left(
		\begin{array}{ccccccc}
			W_0 &  W_1  & W_2 &  \hdots & W_{\overline{\tau}-1}  &  W_{\overline{\tau}} \\
			I_{n} &   0_{n}  & 0_{n}  &\hdots  & 0_{n}& 0_{n}\\
			0_{n} & I_{n} & 0_{n}  &   \hdots  & 0_{n} & 0_{n}  \\
			0_{n} &  0_{n} & I_{n} &  \hdots  & 0_{n} & 0_{n}  \\
			\vdots & \vdots & \vdots & \ddots & \vdots & \vdots \\
			0_{n} & 0_{n} & 0_{n} &  \hdots & I_{n} & 0_{n}
		\end{array}	
		\right),
	\end{align}
	where $W_r$ matrices represent the \textit{data-sharing} matrix among the agents under time-delay $r$ with the entries defined as
	\begin{align} \nonumber
		W_r(i,j) = \left\{
		\begin{array}{ll}
			w_{ij}, & \text{if}~ \tau_{ij}=r,  \\
			0, & \text{otherwise},
		\end{array}\right.
	\end{align} 
	with $\tau_{ij}$ denoting the time-delay on the link $(j,i)$. Particularly we have,
	\begin{align} \label{eq_sumW}
	\sum_{r=0}^{\overline{\tau}} W_r = W.
\end{align}	
	The dynamics~\eqref{eq_p_aug} represents the general augmented consensus among the agents under time-delay. This dynamics can be divided into $N$ local dynamics denoted by $\widehat{\mb{x}}_{k} := \left(\widehat{\mb{x}}^1_{k}; \widehat{\mb{x}}^2_{k}; \dots; \widehat{\mb{x}}^N_{k} \right)$. For each agent $i$, the local augmented consensus is defined as 
	\begin{align} \nonumber
		\widehat{\mb{x}}^i_{k} =& w_{ii}\widehat{\mb{x}}^i_{k-1} \\\label{eq_cons_i} &+ \sum_{j\in\mathcal{N}_i} \sum_{r=0}^{\overline{\tau}} w_{ij}\widehat{\mb{x}}^j_{k-r} \mb{I}_{k-r,ij}(r),
	\end{align}
	with $\mb{I}_{k,ij}(r)$ as the indicator function of delay $r$ registered at time $k$ from agent $j$ to $i$ defined as
	\begin{equation} \label{eq_I}
		\mb{I}_{k,ij}(r) = \left\{
		\begin{array}{ll}
			1, & \text{if}~ \tau_{ij}=r,  \\
			0, & \text{otherwise}.
		\end{array}\right.
	\end{equation}
	with $\tau_{ij}$ as the delay on the link $(i,j)$.

The following assumption describes our time-delay model. 
	\begin{ass} \label{ass_tau}	
		\textbf{(Communication Delays):} For any communication link $(j,i)$ in the multi-agent network $\mc{G}_W$, the associated time-delay $\tau_{ij}$ satisfies:
		\begin{enumerate}
			\item \textbf{Boundedness}: $\tau_{ij} \in \{0, 1, 2, \ldots, \bar{\tau}\}$ where $\bar{\tau} \in \mathbb{N}$ is a finite upper bound;
			\item \textbf{Time-invariance}: $\tau_{ij}$ is constant over time;
			\item \textbf{Heterogeneity}: For distinct links $(j,i)$ and $(l,k)$, it is not required that $\tau_{ij} = \tau_{lk}$;
			\item \textbf{Observability}: The delay values are known to the respective agents through time-stamped communication protocols.
		\end{enumerate}
	\end{ass}	
{Assumption 2.2 on the \textit{time-invariance} of delays $\tau_{ij}$ for every $(j, i) \in \mc{E}_W$ is needed for row-stochasticity of the augmented matrix $\overline{W}$. This follows the fact that only one of the terms
$W_0(i, j)$, $W_1(i, j)$, $\dots$ ,$W_{\overline{\tau}}(i, j)$ is equal to $w_{ij}$ and the rest must be
equal to zero. Therefore, in the first block-row of $\overline{W}$, we have $\sum_{j=1}^{N(\overline{\tau}+1)}\overline{W}(i,j)=\sum_{j=1}^{N}w_{ij}$, and given a row-stochastic
matrix $W$, its augmented version $\overline{W}$ also remains row-stochastic under time-invariant delays, and Eq.~\eqref{eq_sumW} holds. This is a requirement for the proof of stability in Theorem~\ref{thm_tau*}. The case of \textit{time-varying} delays needs a different non-stochastic distributed estimation setup, which is one of our future research directions.}

\subsection{Formal Problem Statement}
	Given a social dynamics~\eqref{eq_sys1} under Assumption~\ref{ass_fullRank}, observation model~\eqref{eq_H_i}, communication network $\mathcal{G}_W$ with communication delays satisfying Assumption~\ref{ass_tau}, and potential agent failures, design a single time-scale distributed inference scheme $\{\widehat{\mathbf{x}}^i_{k|k}\}_{i=1}^N$ such that:
	\begin{enumerate}[(i)]
		\item be  Schur-stable for any heterogeneous delay bounded by $\tau^\ast$, i.e., the closed-loop error matrix introduced later (cf.~\eqref{eq_Ahat}) has spectral radius strictly less than one for all $\tau \le \tau^\ast$;
		\item preserve distributed observability and error stability after the failure of any set $F$ of at most $p$ observationally equivalent agents, i.e.,
		\[
		\operatorname{rank}\!\big(\mc{O}(W_{-F}\!\otimes\!A,\, D_{H-F})\big) = nN,
		\]
		where   \[
		D_H := \operatorname{blockdiag}\!\big[ H_1^\top H_1,\; H_2^\top H_2,\; \ldots,\; H_N^\top H_N \big],
		\]
		and $W_{-F}$ and $D_{H-F}$ are obtained from $W$ and $D_H$ by removing the rows/columns (and blocks) associated with the failed agents, and $\mc{O}(\cdot,\cdot)$ denotes the usual observability matrix of the pair.
	\end{enumerate}

\section{Main Results} \label{sec_mainResults}
In this section, we begin by introducing a distributed estimation algorithm (Section~\ref{sec:mainDistAlg}), which outlines the steps that various agents need to follow in order to accurately determine the social state. Following this, we introduce some auxiliary results (Section~\ref{sec:relAuxLem}) that later will allow us to provide a detailed examination of the conditions that are both necessary and sufficient to ensure the successful acquisition of the social state (Section~\ref{sec:proofStab}). Building upon this foundation, we then present a robust methodology designed to maintain the integrity of this process, even in the face of agent failures due to natural occurrences or external interventions (Section~\ref{sec:obsRec}).

\subsection{The Main Distributed Algorithm}\label{sec:mainDistAlg}
The distributed monitoring algorithm by each agent to retrieve the social state (at time $k$) in the presence of heterogeneous time delays is as follows:
	\begin{enumerate}[(i)]
		\item \textit{Prediction step:} Agent $i$ averages \emph{a-priori} estimates via consensus while accounting for possible delays:
		\begin{align} \nonumber
			\widehat{\mb{x}}^i_{k|k-1} =& w_{ii}A\widehat{\mb{x}}^i_{k-1|k-1} \\\label{eq_p} &+ \sum_{j\in\mathcal{N}_i} \sum_{r=0}^{\overline{\tau}} w_{ij}A^{r+1}\widehat{\mb{x}}^j_{k-r|k-r} \mb{I}_{k-r,ij}(r),
		\end{align}
		where $\widehat{\mb{x}}^i_{k|k-1}$ denotes the local inference of social state~$\mb{x}_k$. The inference is based on all the information at agent~$i$ and its neighboring agents $\mc{N}_i =\{j~|~(j,i)\in\mc{E}_W\}$ up to time~$k-1$. In other words, under latency, agent $i$ averages the delivered data from agents $j\in \mc{N}_i$ as they arrive.
		Additionally, $A\in\mathbb{R}^{n\times n}$ is the social dynamics in~\eqref{eq_sys1}, and $\mb{I}_{k,ij}(r)$ defined as
		in \eqref{eq_I}.
		Recalling that the delays are time-invariant, at every link $(j,i)$ for one and only one $0 \leq r \leq \overline{\tau}$, we have $\mb{I}_{k-r,ij}(r)\neq 0$. The dynamics~\eqref{eq_p} can be obtained from \eqref{eq_cons_i} by including the a-priori estimates, i.e., by substituting $\widehat{\mb{x}}^j_{k-r} = A^{r+1}\widehat{\mb{x}}^j_{k-r|k-r}$ and $\widehat{\mb{x}}^i_{k-1} = A\widehat{\mb{x}}^i_{k-1|k-1}$ in Eq.~\eqref{eq_cons_i}.
		\\
		For \textit{delay-free} multi-agent communication network, the a-priori estimate at agent $i$ simplifies to
		\begin{align} \label{eq_delayfree}
			\widehat{\mb{x}}^i_{k|k-1} = w_{ii}A\widehat{{\mb{x}}}^i_{k-1|k-1}  + \sum_{j\in\mathcal{N}_i}  w_{ij}A\widehat{\mb{x}}^j_{k-1|k-1},
		\end{align} 
		where $A\widehat{{x}}^i_{k-1|k-1}$ is the prediction of $\widehat{{x}}^i_{k-1|k-1}$ at agent $i$ and $A\widehat{{x}}^j_{k-1|k-1}$ denotes the state prediction of $\widehat{{x}}^j_{k-1|k-1}$ at neighbouring agent $j \in \mathcal{N}_i$. Using the consensus weight matrix $W$, the inferred states are then averaged in the neighborhood of every agent $i$ to get the a-priori estimate. 
		\item \textit{Observation update:} Agent $i$ performs a measurement update (or innovation step) based on their partial sensing. This step involves no data-sharing, and every agent $i$ updates its prediction in~\eqref{eq_p} via its own local (partial) observation $\mb{y}^i_k$ of the social system:
		\begin{align}
			\label{eq_m}
			\widehat{\mb{x}}^i_{k|k} =& \widehat{\mb{x}}^i_{k|k-1} + K^i H_i^\top \left(\mb{y}^i_k-H_i\widehat{\mb{x}}^i_{k|k-1}\right),
		\end{align}
		with $K^i$ as the gain matrix at agent $i$ whose design is fully described in the Appendix\footnote{It should be clarified that in the design of the gain matrix $K$ in Appendix, we only need matrix $W$ to check the Schur stability of the closed-loop error dynamics, and matrix $W$ has no other role in the design of $K$ and in the dynamics \eqref{eq_m}.}.  This observation-update step can be modified to address possibly rank-deficient systems (not necessarily social models). This can be performed by incorporating data-sharing on the measurements in the neighborhood of every agent for observability recovery.
\end{enumerate}

\begin{remark}
	The proposed model \eqref{eq_p}-\eqref{eq_m} is a single \mbox{time-scale} estimator with no inner consensus loop. In other words, there is only one step of data exchange and consensus update between two consecutive time steps $k$ and $k+1$ of the social dynamics \eqref{eq_sys1}. \hfill$\circ$
\end{remark}

\subsection{Related Auxiliary Lemmas}\label{sec:relAuxLem}
Next, before we analyze the error stability under possible latency, to establish a successful retrieval of the social state, we need to introduce the following two lemmas. The first lemma analyzes the eigenspectrum of the augmented structured matrices.
\begin{lem} \label{lem_polynom}
	Consider the following block matrix of size $nN$ with $N \times N$ block matrices $P_{i}, I_N, 0_N$,
	\begin{align}\label{eq_aug_A}
		\overline{P}_{n,i} := \left(
		\begin{array}{ccccc}
			0_N & \hdots & P_{i} & \hdots & 0_N \\
			I_N &   0_N & \hdots &\hdots & 0_N\\
			0_N &  I_N & \ddots   &  \hdots   & 0_N  \\
			\vdots & \vdots & \ddots &  \ddots & \vdots \\
			0_N & 0_N &  \hdots & I_N & 0_N
		\end{array}	
		\right).
	\end{align} \normalsize
	The first block-row includes matrix $P_i$ at the $i$th block, and the rest of the blocks are zero. Let denote the characteristic polynomials of $P_{i}$ and $\overline{P}_{n,i}$ respectively by $p(\lambda)$ and $q(\overline{\lambda})$. Then, $q(\overline{\lambda}) = \overline{\lambda}^{N(n-i)}p(\overline{\lambda}^i)$.
\end{lem}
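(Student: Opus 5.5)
We prove the identity by computing $\det(\overline{\lambda} I_{nN} - \overline{P}_{n,i})$ directly. The approach is block Gaussian elimination that exploits the shift (companion-like) structure of $\overline{P}_{n,i}$.

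We read the statement as follows. $\overline{P}_{n,i}$ has $n$ block-rows and block-columns of size $N$. Its first block-row has $P_i$ in block-column $i$, with $1\le i\le n$. The identity blocks $I_N$ sit on the block-subdiagonal. Write $\overline{\lambda}I-\overline{P}_{n,i}$ blockwise:
\begin{itemize}
\item diagonal blocks $\overline{\lambda}I_N$;
\item subdiagonal blocks $-I_N$;
\item the block $(1,i)$ equal to $\overline{\lambda}I_N - P_i$ if $i=1$, and $-P_i$ if $i>1$.
\end{itemize}

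The first step is to work with the eigenvector equation instead of the determinant. This shows the structure and guides the elimination. If $\overline{P}_{n,i}v=\overline{\lambda} v$ with $v=(v_1;\dots;v_n)$, the subdiagonal rows give $v_{m}=\overline{\lambda}v_{m+1}$. Hence $v_m=\overline{\lambda}^{\,n-m}v_n$. The first row then reads $P_i v_i=\overline{\lambda}v_1$, that is, $\overline{\lambda}^{\,n-i}P_i v_n=\overline{\lambda}^{\,n}v_n$.

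For $\overline{\lambda}\neq0$ this says exactly that $\overline{\lambda}^{\,i}$ is an eigenvalue of $P_i$. This already predicts the factor $p(\overline{\lambda}^i)$.

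To get the full polynomial identity with multiplicities, we perform the determinant computation rigorously. Let $\overline{\lambda}\neq 0$ first. Eliminate the subdiagonal $-I_N$ blocks by block column operations, moving from the last block-column toward the first: add $\overline{\lambda}^{-1}$ times block-column $m$ to block-column $m-1$. Equivalently, use the Schur complement with respect to the lower-right $(n-1)N\times(n-1)N$ block. That block is block lower-bidiagonal with diagonal $\overline{\lambda}I_N$, so its determinant is $\overline{\lambda}^{N(n-1)}$.

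Its inverse has $(m,1)$-type entries $\overline{\lambda}^{-(m-1)}I_N$. The Schur complement of the top-left block is therefore
\[
\overline{\lambda}I_N - \overline{\lambda}^{-(i-1)}P_i.
\]
This follows because the only nonzero off-diagonal entry in the first block-row is $-P_i$ in column $i$, and the column of the inverse reaching it carries the factor $\overline{\lambda}^{-(i-1)}$. For $i=1$ the formula holds directly, since there is no off-diagonal term and the top-left block is already $\overline{\lambda}I_N-P_1$.

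Multiplying out gives
\[
q(\overline{\lambda})=\overline{\lambda}^{N(n-1)}\det\!\big(\overline{\lambda}I_N-\overline{\lambda}^{-(i-1)}P_i\big)=\overline{\lambda}^{N(n-1)}\,\overline{\lambda}^{-N(i-1)}\det\!\big(\overline{\lambda}^{\,i}I_N-P_i\big)=\overline{\lambda}^{N(n-i)}p(\overline{\lambda}^{\,i}).
\]

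This identity is established for all $\overline{\lambda}\neq 0$. Both sides are polynomials in $\overline{\lambda}$, since $n\ge i$. By continuity, or because two polynomials agreeing on infinitely many points coincide, the identity extends to $\overline{\lambda}=0$.

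The main obstacle is the bookkeeping in the Schur complement step. We must check that the power of $\overline{\lambda}^{-1}$ attached to block-column $i$ is exactly $i-1$, and treat the case $i=1$ separately. The remaining steps are routine.
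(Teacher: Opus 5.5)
Your proof is correct and follows essentially the same route as the paper: both compute $\det(\overline{\lambda}I_{nN}-\overline{P}_{n,i})$ by a block Schur-complement factorization, in which inverting the bidiagonal shift block attaches the factor $\overline{\lambda}^{-(i-1)}$ to $P_i$. The differences are minor. The paper pivots on the upper-left $N(i-1)\times N(i-1)$ block and then takes a block-triangular determinant of size $N(n-i+1)$, whereas you pivot on the lower-right $N(n-1)\times N(n-1)$ block and reduce directly to an $N\times N$ determinant; you also handle explicitly the cases $i=1$ and $\overline{\lambda}=0$, which the paper leaves implicit.
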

\begin{proof}
	Consider the following structure to analyze the eigenspectrum of $\overline{P}_{n,i}$,
	\begin{align}
		\overline{\lambda} I_{nN} - \overline{P}_{n,i} :=
		\left(\begin{array}{cc}
			\mc{A}  & \mc{B}  \\
			\mc{C} & \mc{D}
		\end{array}\right).
	\end{align}
	The block-matrices $\mc{A}, \mc{B},
	\mc{C}, \mc{D}$ are respectively of size $N(i-1) \times N(i-1)$,   $N(i-1) \times N(n-i+1)$, $N(n-i+1) \times N(i-1)$, and $N(n-i+1) \times N(n-i+1)$ and are defined as
	
	\small \begin{align} \nonumber
		\mc{A} &= \left(
		\begin{array}{ccccc}
			\overline{\lambda} I_N & 0_N & \hdots & \hdots &  0_N \\
			-I_N &  \overline{\lambda} I_N  & \hdots &\hdots & 0_N\\
			0_N &  -I_N & \ddots   &  \hdots   & 0_N  \\
			\vdots & \vdots & \ddots &  \ddots & \vdots \\
			0_N & 0_N &  \hdots & -I_N & \overline{\lambda} I_N
		\end{array}	
		\right),
	\end{align}
	
	\begin{align}  \nonumber
		\mc{B} &= \left(
		\begin{array}{cccc}
			-P_i & 0_N & \hdots  & 0_N \\
			0_N &   0_N & \hdots & 0_N\\
			\vdots & \vdots & \vdots &   \vdots \\
			0_N & 0_N &  \hdots  & 0_N
		\end{array}	
		\right), \ \ \mc{C} =\left(
		\begin{array}{cccc}
			0_N &  \hdots & 0_N & -I_N   \\
			0_N & \hdots & 0_N & 0_N\\
			\vdots & \vdots & \vdots & \vdots \\
			0_N & 0_N &  \hdots &  0_N
		\end{array}
		\right), 
	\end{align}
	
	\begin{align} \nonumber \text{ and } \quad
		\mc{D} = \left(
		\begin{array}{ccccc}
			\overline{\lambda} I_N & 0_N & \hdots & \hdots & 0_N \\
			-I_N &  \overline{\lambda} I_N & \hdots &\hdots & 0_N\\
			0_N &  -I_N & \ddots   &  \hdots   & 0_N  \\
			\vdots & \vdots & \ddots &  \ddots & \vdots \\
			0_N & 0_N &  \hdots & -I_N & \overline{\lambda} I_N
		\end{array}
		\right).
	\end{align} \normalsize
	From the definition we have $p(\lambda)= \mbox{det}|\lambda I_N - P_i|$, and
	\begin{align} \label{eq_gfeh}
		q(\overline{\lambda}) = \mbox{det}|\overline{\lambda} I_{nN} - \overline{P}_{n,i}|= \mbox{det}|\mc{A}|\mbox{det}|\mc{D}-\mc{C}\mc{A}^{-1}\mc{B}|,
	\end{align}
	where
	
	\small\begin{align}
		\mc{A}^{-1} &= \left(
		\begin{array}{ccccc}
			\frac{I_N}{\overline{\lambda}} & 0_N & \hdots & \hdots & 0_N \\
			\frac{I_N}{\overline{\lambda}^2} &   \frac{I_N}{\overline{\lambda}} & \hdots &\hdots & 0_N\\
			\vdots & \vdots & \ddots &  \ddots & \vdots \\
			\frac{I_N}{\overline{\lambda}^{i-1}} & \frac{I_N}{\overline{\lambda}^{i-2}} &  \hdots & \frac{I_N}{\overline{\lambda}^{2}} & \frac{I_N}{\overline{\lambda}}
		\end{array}	
		\right), 
	\end{align}
	
	\begin{align}
		\mc{D}-\mc{C}\mc{A}^{-1}\mc{B} &=
		\left(
		\begin{array}{ccccc}
			\overline{\lambda} I_N-\frac{P_i}{\overline{\lambda}^{i-1}} & 0_N & \hdots & \hdots & 0_N \\
			-I_N &  \overline{\lambda} I_N & \hdots &\hdots & 0_N\\
			0_N &  -I_N & \ddots   &  \hdots   & 0_N  \\
			\vdots & \vdots & \ddots &  \ddots & \vdots \\
			0_N & 0_N &  \hdots & -I_N & \overline{\lambda} I_N
		\end{array}
		\right).
	\end{align} \normalsize
	Then,  $ \mbox{det}|\mc{D}-\mc{C}\mc{A}^{-1}\mc{B}|=\overline{\lambda}^{N(n-i)}\mbox{det}|\overline{\lambda} I_N-\frac{P_i}{\overline{\lambda}^{i-1}}|$ and $\mbox{det}|\mc{A}|= \overline{\lambda}^{N(i-1)}$. By putting these in~\eqref{eq_gfeh} we obtain 
	\begin{align}
		q(\overline{\lambda}) =  \overline{\lambda}^{N(n-i)}\mbox{det}|\overline{\lambda}^i I_N - P_i| = \overline{\lambda}^{N(n-i)} p(\overline{\lambda}^i).
	\end{align}
\end{proof}

For stability analysis in the presence of latency, we need the following lemma.

\begin{lem} \label{lem_eig_aug}
	Given any matrix $W$ satisfying $\rho(W)<1$ and the  augmented system matrix $\overline{W}$ defined in~\eqref{eq_aug_W}, we have $\rho(\overline{W})\leq \rho(W)^{\frac{1}{\overline{\tau}+1}}<1$.
\end{lem}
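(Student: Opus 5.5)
The plan is to work directly with an eigenpair of $\overline{W}$ and reduce it to a subinvariance inequality for the nonnegative matrix $W$. I will use that the fusion weights $w_{ij}$ are nonnegative. Then every $W_r$ in \eqref{eq_aug_W} is entrywise nonnegative, and by \eqref{eq_sumW} their sum is $W$. This nonnegativity is the structural property the argument rests on.

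\textbf{Step 1 (reduction to the first block row).} Let $\lambda\neq 0$ be an eigenvalue of $\overline{W}$ with eigenvector $(\mb{v}_0;\mb{v}_1;\dots;\mb{v}_{\overline{\tau}})$. The identity sub-diagonal blocks force $\mb{v}_{r}=\lambda^{-r}\mb{v}_0$, so $\mb{v}_0\neq 0$. The first block row then reads
\begin{align} \nonumber
\lambda \mb{v}_0=\sum_{r=0}^{\overline{\tau}} \lambda^{-r} W_r \mb{v}_0 .
\end{align}
Eigenvalues $\lambda=0$ are trivially within the bound. As a sanity check, when only one $W_r$ is nonzero, Lemma~\ref{lem_polynom} gives the characteristic polynomial explicitly: the nonzero eigenvalues are the $(r+1)$-th roots of eigenvalues of $W_r=W$, so the bound is attained with equality.

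\textbf{Step 2 (absolute values and case split).} Set $\mu=|\lambda|$ and $\mb{u}=|\mb{v}_0|\geq 0$, $\mb{u}\neq 0$ (entrywise modulus). The triangle inequality and nonnegativity of the $W_r$ give
\begin{align} \nonumber
\mu\,\mb{u}\leq \sum_{r=0}^{\overline{\tau}} \mu^{-r} W_r \mb{u}.
\end{align}
\emph{Case $\mu\ge 1$:} each $\mu^{-r}\le 1$, so $\mu\mb{u}\le W\mb{u}$.
\emph{Case $\mu<1$:} each $\mu^{-r}\le \mu^{-\overline{\tau}}$; multiplying by $\mu^{\overline{\tau}}$ gives $\mu^{\overline{\tau}+1}\mb{u}\le W\mb{u}$.

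\textbf{Step 3 (Collatz--Wielandt).} For a nonnegative matrix $W$ and a nonzero vector $\mb{u}\ge 0$, the inequality $s\,\mb{u}\le W\mb{u}$ implies $s\le\rho(W)$. This is the subinvariance / Collatz--Wielandt bound from Perron--Frobenius theory. It holds without irreducibility, for example by a perturbation $W+\epsilon\mathbf{1}\mathbf{1}^\top$ and letting $\epsilon\to0$.
In the case $\mu\ge1$ it yields $1\le\mu\le\rho(W)<1$, a contradiction, so this case cannot occur. In the case $\mu<1$ it yields $\mu^{\overline{\tau}+1}\le\rho(W)$. Taking the maximum over eigenvalues gives $\rho(\overline{W})\le\rho(W)^{1/(\overline{\tau}+1)}$, and this is strictly less than one because $\rho(W)<1$.

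\textbf{Main obstacle.} The delicate point is the justification of Step 3. It needs both the nonnegativity of the $W_r$, so that moduli can be pushed inside, and the reducible-case version of Collatz--Wielandt. If signed weights were allowed, the bound could fail, so I would state nonnegativity of $W$ explicitly as the operating hypothesis.
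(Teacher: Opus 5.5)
Your proof is correct, and it takes a different and more general route than the paper. The paper treats only homogeneous delays. It applies Lemma~\ref{lem_polynom} to the case where a single block $W_r=W$ is nonzero, reads off $\rho(\overline{W})=\rho(W)^{1/(r+1)}$ from $q(\overline{\lambda})=\overline{\lambda}^{N(\overline{\tau}-r)}p(\overline{\lambda}^{r+1})$, and then uses that $\rho(W)^{1/(r+1)}$ increases with $r$. The heterogeneous case, with several $W_r$ nonzero at once, is never analysed. Your reduction to $\lambda\mb{v}_0=\sum_r\lambda^{-r}W_r\mb{v}_0$, followed by entrywise moduli and the Collatz--Wielandt subinvariance bound for possibly reducible nonnegative matrices, handles every delay pattern at once.

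The paper's polynomial computation is exact and needs no sign condition. Yours needs $W\ge 0$, and you are right that this cannot be dropped, so the lemma as literally stated (``any matrix $W$'') is false. Take $N=2$, $\overline{\tau}=1$, $W_0=\mathrm{diag}(2,-2)$ and $W_1=\bigl(\begin{smallmatrix}0&2\\-2&0\end{smallmatrix}\bigr)$. Then $W=W_0+W_1$ is nilpotent, so $\rho(W)=0$. Yet $\det(\lambda I-\overline{W})=\det(\lambda^2 I-\lambda W_0-W_1)=(\lambda^2-2)^2$, giving $\rho(\overline{W})=\sqrt{2}>1$.

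As a consequence, neither argument covers the signed, heterogeneous-delay setting in which Theorem~\ref{thm_tau*} invokes the lemma for closed-loop matrices. The paper's argument handles signed matrices only for homogeneous delays, and yours handles heterogeneous delays only for nonnegative matrices.

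One small correction to your sanity check: with a single nonzero block $W_r=W$, the bound is attained only when $r=\overline{\tau}$. For $r<\overline{\tau}$ and $0<\rho(W)<1$, the inequality is strict.
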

\begin{proof}
	Recall that, for eigenspectrum analysis, the characteristic polynomial of $\overline{W}$ follows from Lemma~\ref{lem_polynom}. Consider characteristic polynomials $p(\lambda)$ and $q(\overline{\lambda})$ respectively for $W$ and $\overline{W}$.
	First, assume $\tau_{ij} = \overline{\tau}$ over the entire network. Then, $q(\overline{\lambda}) = p(\overline{\lambda}^{\overline{\tau}+1})$ and $\rho(\overline{W}) = \rho(W)^{\frac{1}{\overline{\tau}+1}}<1$.
	Note that, given $\rho(W)<1$, $\rho(W)^{\frac{1}{{\overline{\tau}}+1}}$ is monotonically increasing on $\overline{\tau}$. Then, for $\tau \leq \overline{\tau}$, we have $\rho(\overline{W}) = \rho(W)^{\frac{1}{\tau+1}}< \rho(W)^{\frac{1}{\overline{\tau}+1}} <1$. Similar reasoning implies that for $\tau_{ij} =r<\overline{\tau}$ and $W_{r} = W$ the following holds: $\rho(\overline{W}) = \rho(W)^{\frac{1}{r+1}}< \rho(W)^{\frac{1}{\overline{\tau}+1}} <1$.
\end{proof}

\subsection{Proof of Stability}\label{sec:proofStab}

For stability analysis, define the distributed inference error at agent $i$ as $\mb{e}_{k}^i := \mb{x}^i_{k} - \widehat{\mb{x}}^i_{k|k}$ and define the collective error vector $\mb{e}_{k}$ at all agents as,
\begin{align}\nonumber
	\mb{e}_{k} = \left( \begin{array}{c}
		\mb{e}_{k}^1\\
		\vdots \\
		\mb{e}_{k}^N
	\end{array}\right).
\end{align}

The main result of this paper can be stated as follows.

\begin{thm} \label{thm_tau*}
	Given a Schur stable delay-free error dynamics\begin{align}\label{eq_err0}
		\mb{e}_{k} = (W\otimes A - KD_H(W\otimes A))\mb{e}_{k-1} +
		{\mb{\eta}}_k,
	\end{align}
	with stabilizing gain ${K}:=\mbox{blockdiag}[K^i,\ldots,K^N]$ and column vector $\eta_{k} = [\eta^1_k;\dots;\eta^N_k]$, where
	\begin{align}
		\label{eq_eta0}
		\eta^i_k &:= \nu_{k-1}-H_i^\top H_i\nu_{k-1} -  H_i^\top \zeta^i_{k}, \\
		&\text{ and } D_H := \left(
		\begin{array}{cccc}
			H_1^\top H_1\\
			&\ddots\\
			& &H_N^\top H_N\
		\end{array}
		\right).
	\end{align} The distributed inference scheme \eqref{eq_p}-\eqref{eq_m} is Schur stable under Assumption~\ref{ass_fullRank}-\ref{ass_tau} for any bounded delay $\overline{\tau} \leq \overline{\tau}^*$ with  $\overline{\tau}^*$ satisfying
	\begin{align}\label{eq_tau*0}
		\rho(W\otimes A^{\overline{\tau}^*+1} - K \overline{D}_H (W\otimes A^{\overline{\tau}^*+1}) )< 1.
	\end{align}
	with $\overline{D}_H= D_H^\top D_H$.  	
\end{thm}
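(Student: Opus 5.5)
We derive the delayed error dynamics, write them as an augmented companion-type system of the same form as $\overline{W}$ in \eqref{eq_aug_W}, and then use Lemma~\ref{lem_polynom} and Lemma~\ref{lem_eig_aug} to reduce Schur stability of the augmented system to condition \eqref{eq_tau*0}.

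\emph{Error recursion.} Substituting the prediction step \eqref{eq_p} into the observation update \eqref{eq_m} and subtracting from $\mb{x}_k$ gives the local error. We write $\mb{x}_k = A^{r+1}\mb{x}_{k-r-1} + \sum_{s} A^{s}\nu_{k-1-s}$ and use row-stochasticity of $W$, which holds through \eqref{eq_sumW} under the time-invariant delays of Assumption~\ref{ass_tau}. The true state then splits consistently across the delayed terms, and we obtain
\begin{align*}
\mb{e}_k = \sum_{r=0}^{\overline{\tau}} \big(W_r\otimes A^{r+1} - K D_H (W_r\otimes A^{r+1})\big)\mb{e}_{k-r-1} + \underline{\eta}_k ,
\end{align*}
where $\underline{\eta}_k$ collects the accumulated process and measurement noise. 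Here $\underline{\eta}_k$ is a bounded-covariance input that does not affect Schur stability. Stacking $\underline{\mb{e}}_k = (\mb{e}_k;\dots;\mb{e}_{k-\overline{\tau}})$ yields $\underline{\mb{e}}_k = \underline{\widehat{A}}\,\underline{\mb{e}}_{k-1} + (\underline{\eta}_k;0;\dots;0)$. The matrix $\underline{\widehat{A}}$ has the block-companion structure of \eqref{eq_aug_W}: its first block row is $\widehat{A}_r := (I-KD_H)(W_r\otimes A^{r+1})$, and it has identities on the subdiagonal.

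\emph{Spectral bound.} We first treat the extreme case in which all links carry the maximal delay $\overline{\tau}$. Then only the last block of the first row is nonzero, and Lemma~\ref{lem_polynom} with $P=\widehat{A}_{\overline{\tau}}$ gives $q(\overline{\lambda}) = p(\overline{\lambda}^{\overline{\tau}+1})$. Hence $\rho(\underline{\widehat{A}}) = \rho\big((I-KD_H)(W\otimes A^{\overline{\tau}+1})\big)^{1/(\overline{\tau}+1)}$. This is strictly less than one exactly when $\rho(W\otimes A^{\overline{\tau}+1} - K D_H (W\otimes A^{\overline{\tau}+1}))<1$, which is the structure of \eqref{eq_tau*0}. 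The $\overline{D}_H$ there replaces $D_H$ according to the gain normalization in the Appendix, and we adopt that convention. For heterogeneous delays $\tau_{ij}\le\overline{\tau}\le\overline{\tau}^*$ we argue as in the proof of Lemma~\ref{lem_eig_aug}: the pure-delay cases $\tau_{ij}\equiv r$ give $\rho = \rho(\widehat{A}_r)^{1/(r+1)}$. Monotonicity in the exponent then lets us take the worst case $r=\overline{\tau}^*$, and condition \eqref{eq_tau*0} at $\overline{\tau}^*$ certifies stability for every smaller bound.

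\emph{Main obstacle.} The difficult step is the genuinely heterogeneous case, where the first block row contains several nonzero blocks $\widehat{A}_r$. There Lemma~\ref{lem_polynom} does not apply directly, and the monotonicity argument as written only covers the pure-delay cases. I would first try a Lyapunov–Krasovskii style argument or a block-diagonal scaling. Pick $\delta = \rho^{1/(\overline{\tau}+1)}$ with $\rho$ slightly above the spectral radius in \eqref{eq_tau*0}, and consider the characteristic equation $\det\big(\lambda^{\overline{\tau}+1} I - \sum_r \lambda^{\overline{\tau}-r}\widehat{A}_r\big)=0$. For $|\lambda|\ge 1$ we would show this cannot vanish. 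The approach is to exploit the fact that the $W_r$ have disjoint supports summing to $W$, so a suitable induced norm of $\sum_r \lambda^{-r-1}(W_r\otimes A^{r+1})$ is dominated by the worst single delay term. Making this domination rigorous for unstable $A$, where $\|A^{r+1}\|$ grows in $r$, is where I expect the argument may need an extra assumption, for example that the bound be taken uniformly over $r\le\overline{\tau}^*$.
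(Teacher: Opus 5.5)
\textbf{Gap: the monotonicity step.} Your reduction follows the paper's route: write the delayed error as a block-companion system, apply Lemma~\ref{lem_polynom} when one block of the first row is nonzero, and use the monotonicity argument of Lemma~\ref{lem_eig_aug} otherwise. Your blocks $\widehat{A}_r=(I-KD_H)(W_r\otimes A^{r+1})$ are the ones consistent with \eqref{eq_p} and with the power $A^{\overline{\tau}+1}$ in \eqref{eq_tau*0}; the paper's $\overline{WA}$ writes $W_r\otimes A$.

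The monotonicity step fails. In Lemma~\ref{lem_eig_aug} the base matrix is fixed and only the exponent $1/(r+1)$ varies. Here the base $\widehat{A}_r$ itself changes with $r$, so $\rho(\widehat{A}_0)<1$ and $\rho(\widehat{A}_{\overline{\tau}^*})<1$ say nothing about $\rho(\widehat{A}_r)$ for $0<r<\overline{\tau}^*$.

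For example, take $n=2$, $A=2\left(\begin{smallmatrix}0&-1\\1&0\end{smallmatrix}\right)$, $H_i=(1,0)$ for all agents, $K^i=I_2$, and any row-stochastic $W$. Then $(A,H_i)$ is observable and $\overline{D}_H=D_H$. Also $I-KD_H=I_N\otimes P$ with $P=\mathrm{diag}(0,1)$, where $PA$ and $PA^3$ are nilpotent while $PA^2=-4P$. Hence $\rho(\widehat{A}_0)=\rho(\widehat{A}_2)=0$, so the delay-free hypothesis and \eqref{eq_tau*0} with $\overline{\tau}^*=2$ both hold. Yet for uniform delay $1$ your own Lemma~\ref{lem_polynom} computation gives $\rho(\underline{\widehat{A}})=\rho(\widehat{A}_1)^{1/2}=2$. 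The uniform-delay analysis therefore only yields ``uniform delay $r$ is stable iff $\rho(\widehat{A}_r)<1$'', which must be checked separately for each $r$.

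\textbf{Gap: the heterogeneous case.} The case you leave open is not a corner case. The self term $w_{ii}A\widehat{\mb{x}}^i_{k-1|k-1}$ in \eqref{eq_p} is never delayed and $w_{ii}\neq 0$. So as soon as any link is delayed, the first block row has at least two nonzero blocks, and your ``all links carry $\overline{\tau}$'' case is realizable only by buffering.

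The paper closes this step only by citing Lemma~\ref{lem_eig_aug}, whose proof also covers uniform delays only. Following the paper therefore does not supply the missing idea. Take the example above with $N=2$, $W=\frac{1}{2}\mathbf{1}\mathbf{1}^\top$, undelayed self-loops, and both cross links delayed by one, so $\overline{\tau}=1\le\overline{\tau}^*=2$. The consensus mode $\mb{e}^1=\mb{e}^2$ has characteristic polynomial $\lambda^2(\lambda^2+2)$, with modes of modulus $\sqrt{2}$. So the statement as worded cannot be proved without an extra hypothesis.

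\textbf{What would repair it.} Your domination idea does give such a hypothesis once it is phrased with norms, uniformly over the delays that occur. Suppose $\max_i\sum_j |w_{ij}|\,\|(I-K^iH_i^\top H_i)A^{\tau_{ij}+1}\|<1$ for some vector norm. Then for $|\lambda|\ge 1$ the eigen-equation $v=\sum_r\lambda^{-r-1}\widehat{A}_r v$ is impossible in the block-max norm, which gives stability for the whole delay pattern. This is a different, norm-based condition, not \eqref{eq_tau*0}.

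\textbf{Minor point.} The Appendix contains no normalization that turns $D_H$ into $\overline{D}_H$. The correct remark is that for canonical $H_i$ each $H_i^\top H_i$ is an orthogonal projection, so $D_H^\top D_H=D_H$.
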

\begin{proof}
	For proof analysis, we perform the error analysis of the updated inference scheme \eqref{eq_p}-\eqref{eq_m}. Implementing the augmented consensus update, define the augmented social inference vector
	$$\underline{\widehat{\mb{x}}}_{k|k-1} := \left(\widehat{\mb{x}}_{k|k-1}; \widehat{\mb{x}}_{k-1|k-2}; \dots; \widehat{\mb{x}}_{k-\overline{\tau}|k-\overline{\tau}-1} \right)$$
	and similarly for $\underline{\widehat{\mb{x}}}_{k|k}$. Then, the global version of inference model \eqref{eq_p}-\eqref{eq_m} can be written in compact form as,
	\begin{align}\label{eq_p_aug0}
		\underline{\widehat{\mb{x}}}_{k|k-1} =&\overline{WA} \underline{\widehat{\mb{x}}}_{k-1|k-1},
		\\\label{eq_m_aug}
		\underline{\widehat{\mb{x}}}_{k|k} =&  \underline{\widehat{\mb{x}}}_{k|k-1} + \iota^{\overline{\tau}+1}_1 \otimes K D_H^\top \left(\mb{y}_k-D_H\Theta^{Nn}_{1,\overline{\tau}}\underline{\widehat{\mb{x}}}_{k|k-1}\right),
	\end{align}
	with $\overline{WA}$ representing the augmented version of $W \otimes A$ subject to latency as,
	\begin{align}\label{eq_aug_WA}
		\overline{WA}= \left(
		\begin{array}{cccccc}
			W_0 \otimes A & W_1\otimes A  &  \hdots & W_{\overline{\tau}-1}\otimes A  & W_{\overline{\tau}} \otimes A \\
			I_{Nn} &   0_{Nn}  &\hdots  & 0_{Nn}& 0_{Nn}\\
			0_{Nn} & I_{Nn} &   \hdots  & 0_{Nn} & 0_{Nn}  \\
			\vdots & \vdots &  \ddots & \vdots & \vdots \\
			0_{Nn} & 0_{Nn} &  \hdots & I_{Nn} & 0_{Nn}
		\end{array}	
		\right),
	\end{align}
	and
	\begin{align}	\nonumber
		\Theta^{Nn}_{i,\overline{\tau}}= (\iota^{\overline{\tau}+1}_i \otimes I_{Nn})^\top
	\end{align}	
	as an $Nn \times (\overline{\tau}+1)Nn $ matrix with $\iota^{\overline{\tau}+1}_i$ defined as,
	\begin{align}	\nonumber
		\iota^{\overline{\tau}+1}_i =( \underbrace{\overbrace{0;\dots;0}^{i-1};1;0;\dots;0}_{\overline{\tau}+1}).
	\end{align}	
	Note that in the proposed inference scheme, the matrix $\overline{WA}$ is not needed, and this is only defined for proof analysis and to simplify the mathematical illustration.
	
	For notation simplicity, we define
	$$\underline{\mb{x}}_{k} := \left(1_{N} \otimes   \mb{x}_k; 1_{N} \otimes \mb{x}_{k-1}; \dots; 1_{N} \otimes \mb{x}_{k-\overline{\tau}} \right).$$
	The augmented error vector can be described by $\underline{\mb{e}}_{k}  = \underline{\mb{x}}_{k} - \underline{\widehat{\mb{x}}}_{k|k}$ representing the difference of exact social states and estimated social states under latency.
	The overall observation model is defined as
	\begin{align}\nonumber
		\left(
		\begin{array}{c}
			\mb{y}^1_{k}\\
			\vdots\\
			\mb{y}^N_{k}
		\end{array}
		\right) &=
		\left(
		\begin{array}{c}
			H_{1}\\
			\vdots\\
			H_{N}
		\end{array}
		\right)\left(
		\begin{array}{c}
			x^1_{k}\\
			\vdots\\
			x^n_{k}
		\end{array}
		\right)+
		\left(
		\begin{array}{c}
			\zeta^1_{k}\\
			\vdots\\
			\zeta^N_{k}
		\end{array}
		\right),\end{align}
	which can be re-written as	
	\begin{align} \label{eq_y}
		\mb{y}_k &= H\mb{x}_k + \zeta_k,
	\end{align}
	with~$\mb{y}_k\in\mathbb{R}^{p}$,~$H=[H_{ij}]\in\mathbb{R}^{p\times n}$ (with $p=p_1+\ldots+p_N$), and $N$ as the number of agents. Without loss of generality, we assume $p_i=1$ to simplify the notation used in deriving the results.
	
	Ergo, the multi-agent error dynamics can be obtained as follows:
		\begin{align}
			\underline{\mb{e}}_{k}  =&\underline{\mb{x}}_{k} - \Bigl(\underline{\widehat{\mb{x}}}_{k|k-1} + \iota^{\overline{\tau}+1}_1 \otimes K D_H^\top \mb{y}_k-D_H\Theta^{Nn}_{1,\overline{\tau}}\underline{\widehat{\mb{x}}}_{k|k-1}\Bigr) \nonumber
			\\ \nonumber
			=& \underline{\mb{x}}_{k}  - \overline{WA}\underline{\widehat{\mb{x}}}_{k-1|k-1} \\  &- \iota^{\overline{\tau}+1}_1 \otimes K D_H^\top \left(\mb{y}_k-D_H\Theta^{Nn}_{1,\overline{\tau}}\overline{WA} \underline{\widehat{\mb{x}}}_{k-1|k-1}\right). \label{eq_love}
		\end{align}
		Assuming time-invariant delays (Assumption~\ref{ass_tau}),  for every link $(j,i) \in \mc{E}_W$, only one of the terms $W_0(i,j)$, $W_1(i,j)$, $\dots$, $W_{\overline{\tau}}(i,j)$ is equal to $w_{ij}$ and the rest are zero. This follows Assumption~\ref{ass_tau} and the fact that at every time-instant $k$, only one delayed message from agent $i$ is received by the agent $j$ (due to time-invariant delays). 
		Therefore, for a given row-stochastic consensus matrix $W$, its augmented version $\overline{W}$ is also row-stochastic; thus, we can write
		\begin{align} \label{eq_WA}
			\underline{\mb{x}}_{k}   = \overline{WA} \underline{\mb{x}}_{k-1} +  \underline{\mb{\nu}}_{k},
		\end{align}
		where $\widetilde{\mb{\nu}}_{k} :=  1_{N} \otimes \mb{\nu}_{k} $, $\widetilde{\mb{\zeta}}_{k} := 1_{N} \otimes \mb{\zeta}_{k} $, and $\underline{\mb{\nu}}_{k} :=\iota^{\overline{\tau}+1}_i \otimes\widetilde{\mb{\nu}}_{k}$. 
		Therefore, replacing \eqref{eq_WA} into \eqref{eq_love}, we obtain
		\begin{align} \nonumber
			\underline{\mb{e}}_{k} =& \overline{WA} \underline{\mb{x}}_{k-1} + \underline{\mb{\nu}}_{k} -  \overline{WA}\underline{\widehat{\mb{x}}}_{k-1|k-1} \\ \nonumber
			&-  \iota^{\overline{\tau}+1}_1 \otimes K D_H^\top \Big(D_H(1_{N} \otimes\mb{x}_k)+  \widetilde{\mb{\zeta}}_{k} \\
			&-D_H\Theta^{Nn}_{1,\overline{\tau}}\overline{WA}\underline{\widehat{\mb{x}}}_{k-1|k-1}\Big). \nonumber
		\end{align}
		Substituting system equations \eqref{eq_sys1} and \eqref{eq_y} and simplifying the outcome, we get
		\begin{align} \nonumber
			\underline{\mb{e}}_{k} =& \overline{WA} \underline{\mb{e}}_{k-1}
			- \iota^{\overline{\tau}+1}_1 \otimes K \overline{D}_H \Bigl((1_{N} \otimes A\mb{x}_{k-1}) \\ \label{eq_e1}
			&- \Theta^{Nn}_{1,\overline{\tau}}\overline{WA}\underline{\widehat{\mb{x}}}_{k-1|k-1}\Bigr) + \underline{\mb{\eta}}_k
		\end{align}
		where
		\begin{align} \nonumber
			\underline{\mb{\eta}}_k &= \underline{\mb{\nu}}_{k} - \iota^{\overline{\tau}+1}_1 \otimes K \overline{D}_H\widetilde{\mb{\nu}}_{k} -\iota^{\overline{\tau}+1}_1\otimes  K {D}_H^\top\widetilde{\mb{\zeta}}_{k} \\
			&=  \iota^{\overline{\tau}+1}_1 \otimes (\widetilde{\mb{\nu}}_{k} - K \overline{D}_H\widetilde{\mb{\nu}}_{k} - K {D}_H^\top\widetilde{\mb{\zeta}}_{k}).
			\label{eq_eta}
		\end{align}
		as the global augmented noise parameter. Note that by setting $\overline{\tau}=0$ (and $\iota^{\overline{\tau}+1}_1 = 1$) the above equation is the same as Eq.~\eqref{eq_eta0} for $\eta_k$.
		Next, recall from Eqs.~\eqref{eq_aug_W} and \eqref{eq_aug_WA} and the definition of $\Theta^{Nn}_{1,\overline{\tau}}$ that 
			\begin{align}\nonumber
			\Theta^{Nn}_{1,\overline{\tau}}\overline{WA} = \left(
				\begin{array}{cccccc}
					W_0 &  W_1  &  \hdots & W_{\overline{\tau}-1}  &  W_{\overline{\tau}} \\
					W_0 &  W_1  &  \hdots & W_{\overline{\tau}-1}  &  W_{\overline{\tau}}\\
					\vdots & \vdots &   \hdots  & \vdots & \vdots\\
					W_0 &  W_1  &  \hdots & W_{\overline{\tau}-1}  &  W_{\overline{\tau}}
				\end{array}	
				\right),
			\end{align}
	Using the above along with Eq.~\eqref{eq_sumW} and the row-stochasticity of $W$ matrix (i.e., $\sum_{j=1}^N w_{ij}=1$), we have $\Theta^{Nn}_{1,\overline{\tau}}\overline{WA}\underline{\mb{x}}_{k-1}=1_{N} \otimes A\mb{x}_{k-1}$. Then, by substituting \eqref{eq_WA} in \eqref{eq_e1}, we have
		\begin{align} \nonumber
			\underline{\mb{e}}_{k} =& \overline{WA} \underline{\mb{e}}_{k-1}
			- \iota^{\overline{\tau}+1}_1 \otimes K \overline{D}_H \Bigl(\Theta^{Nn}_{1,\overline{\tau}}\overline{WA}\underline{\mb{x}}_{k-1} \\ \nonumber
			&- \Theta^{Nn}_{1,\overline{\tau}}\overline{WA}\underline{\widehat{\mb{x}}}_{k-1|k-1}\Bigr) + \underline{\mb{\eta}}_k \nonumber \\
			=& \overline{WA} \underline{\mb{e}}_{k-1}
			- \iota^{\overline{\tau}+1}_1 \otimes K \overline{D}_H \Theta^{Nn}_{1,\overline{\tau}}\overline{WA}\underline{\mb{e}}_{k-1}  + \underline{\mb{\eta}}_k \label{eq_proof_e}
		\end{align}
		Hence, the error dynamics can be described by
		\begin{align}
			\underline{\mb{e}}_{k} = \underline{\widehat{A}} \underline{\mb{e}}_{k-1}  + \underline{\mb{\eta}}_k, \label{eq_err1}
		\end{align}
		where
		\begin{align}
			\underline{\widehat{A}} := \overline{WA}
			- \iota^{\overline{\tau}+1}_1 \otimes K \overline{D}_H \Theta^{Nn}_{1,\overline{\tau}}\overline{WA}, \label{eq_Ahat}
		\end{align}
		and $\rho(\underline{\widehat{A}})<1$  when \eqref{eq_err1} is Schur stable.
	Note that Eq.~\eqref{eq_err0} as the error dynamics in the absence of delays follows from \eqref{eq_proof_e} by setting $\overline{\tau}=0$. 
	Given that $\rho(W\otimes A - KD_H(W\otimes A))<1$, the Schur stability in the presence of time-delay ($\overline{\tau}>0$) follows from Lemma~\ref{lem_eig_aug}. Recall Eq. \eqref{eq_aug_WA} and the definition of $\underline{\widehat{A}}$. Then, we have
	\begin{align} \label{eq_tau*}
		\rho(\underline{\widehat{A}})\leq \rho(W\otimes A^{\overline{\tau}+1} - K \overline{D}_H (W\otimes A^{\overline{\tau}+1}) )^{\frac{1}{\overline{\tau}+1}}<1.
	\end{align}
	This holds for all $\overline{\tau}\leq \overline{\tau}^*$ satisfying~\eqref{eq_tau*0}. 
\end{proof}

Note that the error dynamics \eqref{eq_err0} is associated with the delay-free case, for which the Schur stability implies that 
\begin{align} \label{eq_delayfree}
	\rho(W\otimes A - KD_H(W\otimes A)) <1.
\end{align} 
Theorem~\ref{thm_tau*} proves that in the presence of heterogeneous time-delays (at different links) with $\tau_{ij} < \overline{\tau}$ the Schur stability follows Eq.~\eqref{eq_tau*} with the error dynamics in the form \eqref{eq_proof_e}. This Schur stability holds for any $\overline{\tau} \leq \overline{\tau}^*$ with  $\overline{\tau}^*$ satisfying Eq.~\eqref{eq_tau*0}. This holds for any gain matrix $K$ satisfying Eq.~\eqref{eq_delayfree}. 

\begin{remark}
	For stable social dynamics with $\rho({A})<1$, we have  $\rho({A})^{\overline{\tau}+1}<\rho({A})$, and therefore, Schur stability of $\widehat{A}$ guarantees Schur stability of $\underline{\widehat{A}}$ for any value of $\overline{\tau}$. This implies that for stable social dynamics  $\rho({A})<1$ we can have $\overline{\tau}^* \rightarrow \infty$.
	\hfill $\circ$
\end{remark}

\begin{remark} \label{rem_phi}
   The upperbound $\Phi$ on the variance of the error $\mb{e}_k$ in the absence of delays and in steady state can be approximated by the results in \cite{ejc2,tnse_attack,ecc} as,
   \begin{align} \label{eq_phi}
   	   \Phi := \frac{\alpha_1 N\|Q\|_2+\alpha_2\|R\|_2}{N\beta}
   \end{align}
   with $\alpha_1 := \|I_{nN}-KD_H\|_2^2 $, $\alpha_2 := \|K\|_2^2$, and certain $\beta<1$ as a function of $\|\widehat{A}\|_2$.    	
\end{remark}

Building on Theorem~\ref{thm_tau*}, it becomes imperative to ensure that the delay-free error dynamics, as delineated in~\eqref{eq_err0}, exhibit Schur stability. Based on the Kalman stability theorem \cite{kalman:61}, the error dynamics \eqref{eq_err0} is stabilizable (in steady-state) if and only if the pair $(W \otimes A, D_H)$ is observable. This is known as \textit{distributed observability}, and the conditions required to be attained are described in the following lemma.
	\begin{lem} \label{lem_stablenodelay}
		The cumulative error dynamics~\eqref{eq_err0} is Schur stabilizable if the matrix $W$ is irreducible, i.e., the associated multi-agent network $\mc{G}_W$ is strongly connected.	
	\end{lem}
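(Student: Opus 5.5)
The plan is to reduce Schur stabilizability of \eqref{eq_err0} to observability of the pair $(W\otimes A, D_H)$, as suggested just before the lemma. I will then prove that observability through the PBH eigenvector test, using irreducibility of $W$ together with \emph{global} observability of $(A,H)$. Global observability means the stacked matrix $H=(H_1;\dots;H_N)$ in \eqref{eq_y} makes $(A,H)$ observable. This is the implicit standing hypothesis of the whole setting. I will state it explicitly, since without it no choice of $W$ can help.

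\textbf{Step 1.} I will note that \eqref{eq_err0} has the form $\mb{e}_k=(I-KD_H)\mathcal{A}\,\mb{e}_{k-1}$ with $\mathcal{A}:=W\otimes A$. This is the standard ``filtered'' observer form. Its stabilizability by a block-diagonal $K$ follows, as the paper indicates via \cite{kalman:61}, from observability (or detectability) of $(\mathcal{A},D_H)$. The block-diagonal structure of $K$ matches that of $D_H$, so each block $K^i$ acts only on agent $i$'s own innovation. I will invoke this as given by the cited Kalman argument and the gain design in the Appendix rather than re-derive it.

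\textbf{Step 2 (PBH).} Suppose $(\mathcal{A},D_H)$ is unobservable. Then there exist $\mb{v}\neq 0$ and $\lambda$ with $(W\otimes A)\mb{v}=\lambda\mb{v}$ and $D_H\mb{v}=0$. Because each block of $D_H$ is $H_i^\top H_i$, the condition $D_H\mb{v}=0$ gives $H_i\mb{v}^i=0$ for every block $\mb{v}^i$. I will first treat the generic case in which $W\otimes A$ is diagonalizable. Its eigenvectors are then $\mb{u}\otimes\mb{w}$, with $W\mb{u}=\mu\mb{u}$ and $A\mb{w}=\sigma\mb{w}$, and the condition reads $u_i H_i\mb{w}=0$ for all $i$.

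\textbf{Step 3 (irreducibility).} For the eigenvalue $\mu=1$ of the row-stochastic $W$, irreducibility gives $\mb{u}=1_N$ via Perron--Frobenius. Hence $H_i\mb{w}=0$ for all $i$, i.e.\ $H\mb{w}=0$, which contradicts observability of $(A,H)$. The main obstacle is the remaining eigenvectors $\mb{u}$ of $W$ that have zero entries. On the support of $\mb{u}$ one only gets $H_i\mb{w}=0$ for those $i$, so the argument above does not close. Combinations over eigenvalue multiplicities, and Jordan blocks of $A$ (possible, since $A$ is only full rank by Assumption~\ref{ass_fullRank}), complicate this further.

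To handle this I will pass to the structural (generic) viewpoint the paper already adopts for $A$, following \cite{Liu-nature,RAMOS2022110229}. The composite graph of $W\otimes A$ is the Kronecker product of $\mc{G}_W$ and $\mc{G}_A$. Strong connectivity of $\mc{G}_W$, together with the self-loops $w_{ii}>0$ and $a_{ii}\neq0$, makes every state node of the product graph reachable to some measured node. Those self-loops also supply a spanning cycle family, i.e.\ a maximum matching. These two conditions give structural observability, so observability holds for almost all numerical weights. Stabilizability then follows from Step 1. This genericity route is the one I expect to rely on for the degenerate cases.
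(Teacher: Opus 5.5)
Your final argument is essentially the paper's proof. You reduce Schur stabilizability to $(W\otimes A,D_H)$-observability via the Kalman argument, then get structural observability of the Kronecker product network from output-connectivity (strong connectivity of $\mathcal{G}_W$ plus self-loops) and full structural rank (self-loops together with full-rank $A$), which the paper takes from Theorem~4 of \cite{tsipn}. The PBH attempt in Steps~2--3 can be dropped, and making the observability of the stacked pair $(A,H)$ an explicit hypothesis is a useful clarification, since the paper's proof relies on it only implicitly.
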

	\begin{proof}
		First, notice that the delay-free dynamics~\eqref{eq_err0} is Schur stabilizable if $(W \otimes A, D_H)$ is observable.   	
		One can model the combination of the social network and the multi-agent network as a network-of-networks model or a composite Kronecker-product network. Then, the observability of $W \otimes A$ follows the structure of the social system graph associated with $A$ and the multi-agent network associated with  $W$. This complex network is a Kronecker product network associated with $W \otimes A$ matrix. To analyze the observability of this Kronecker product network, we recall some results from \cite[Theorem~4]{tsipn}. Ref. \cite{tsipn}  proves the observability condition for the Kronecker product network of two digraphs $\mc{G}_1$ and $\mc{G}_2$ associated with the adjacency matrix in the form $\mc{A}_1 \otimes \mc{A}_2$ with $\mc{A}_2$ being generally full-rank. 
		In our notation, from~\cite[Theorem~4]{tsipn}, it readily follows that $(W \otimes A, D_H)$ is observable if $\mc{G}_W$ satisfies the following two properties:  it is \emph{(i)} strongly-connected and \emph{(ii)} self-damped, i.e., all individual agents possess inner dynamics that would correspond to self-loops in the graph.
		Since, by assumption, every agent assigns a non-zero weight to its own information while averaging the data  (i.e., $w_{ii} \neq 0$) in~\eqref{eq_p}, then the self-damped (or inner dynamics) condition (ii) is satisfied. In the context of observability, having matrix $A$ full-rank from Assumption~\ref{ass_fullRank} and matrix $W$ self-damped (with inner dynamics) implies that $W \otimes A$ is full-rank. Therefore, only (i) needs to be ensured to satisfy output-connectivity; hence, the strong connectivity of $\mc{G}_W$ is sufficient for output-connectivity and $(W \otimes A,D_H)$-observability follows.
	\end{proof}
	\begin{remark}
		Given that $(W\otimes A, D_H)$ is observable and the gain matrix $K$ is designed via the LMI in the Appendix, the error dynamics \eqref{eq_err0} (in the absence of time-delay) is Schur stable from Lemma~\ref{lem_stablenodelay}.\hfill $\circ$
\end{remark}
Recall that strong-connectivity of the multi-agent network $\mc{G}_W$ is the minimum connectivity requirement in the existing distributed filtering scenarios \cite{chen2012diffusion,bordignon2022partial,7463019,khan2014collaborative}. Without strong-connectivity the information of some agents is not accessible to some other agents. 

\begin{remark}
	The design of $K$ via the LMI in the Appendix is irrespective of the time-delays (following from theorem~\ref{thm_tau*}). In other words, the gain matrix $K$ is designed for the delay-free closed-loop model \eqref{eq_err0} and it also works for the dynamics \eqref{eq_err1} in the presence of time-delays $\overline{\tau} \leq \overline{\tau}^*$ satisfying \eqref{eq_tau*0}. This is significant, as it reduces the computational complexity of our methodology.  Specifically, this implies that to stabilize the delayed error dynamics \eqref{eq_err1} of order $nN(\overline{\tau}+1)$, we only need to apply the design method in the Appendix on matrix $\widehat{A}$ of order $nN$, to design the $nN$-by-$nN$ block-diagonal matrix $K$. This further implies that the exact knowledge of time delays is not needed to obtain $K$; hence, there is no need to update this in the proposed inference scheme \eqref{eq_p}-\eqref{eq_m} for different values of time delays.\hfill $\circ$
\end{remark}

To summarize this section, from Lemma~\ref{lem_stablenodelay}, we design the multi-agent network $\mc{G}_W$ to be strongly-connected. This ensures $(W\otimes A, D_H)$-observability (or distributed observability) and the design of the block-diagonal gain matrix $K$ in the Appendix ensures the error stability. Then, the convergence of the proposed filter \eqref{eq_p}-\eqref{eq_m} in the presence of delays $\overline{\tau} \leq \overline{\tau}^*$ (with $\overline{\tau}^*$ satisfying Eq.~\ref{eq_tau*0}) follows from Theorem~\ref{thm_tau*}. 

\begin{remark}
	In this paper, we make no assumption on the stability of the social dynamics $A$, and matrix $A$ can be unstable. This is clearly shown in the simulations in Section~\ref{sec_sim}.
\end{remark}

\begin{remark}
Bounded time-varying delays $\tau_{ij}(k)\le\bar{\tau}$ can be accommodated
within the present framework by exploiting the time-stamped protocol to buffer each
received packet to the uniform maximum age $\bar{\tau}$. Under this uniformization only
$W_{\bar{\tau}}=W$ is active, the augmented matrix $\overline{W}$ remains row-stochastic, and the
scheme reduces to the homogeneous constant-delay case of Lemma~2 with $\tau=\bar{\tau}$;
hence Theorem~1 and the gain design of the Appendix apply unchanged. The price is
conservativeness: the network operates permanently at the worst-case latency $\bar{\tau}$
and stores $O(\bar{\tau})$ samples per link, discarding the benefit of packets that arrive
early.
\end{remark}

One can further add `redundancy' to the network $\mc{G}_W$ to ensure preserving distributed observability in case of agent failure, as discussed next. 

\subsection{Observability Recovery after Agent Failure}\label{sec:obsRec}

Next, we address scenarios in which information from an agent becomes unavailable due to excessive delays, packet losses, faults, cyber-attacks, or other cybernetic issues. We refer to such instances as `failed agents'. This failure results in the loss of  $(W\otimes A, D_H)$ observability, causing the inference error dynamics, as described in~\eqref{eq_err0}, to become non-Schur stable.

Therefore, the primary objective hereafter is to restore $(W\otimes A, D_H)$ observability despite the presence of failed agents. We propose achieving this through the integration of \textit{observationally equivalent agents}. These agents are designed to replicate the functionality of the failed agents closely, thus compensating for the lost information and ensuring the system's overall observability and error stability. 

\begin{definition}
	\cite{tnse18} \label{def_obsrv}
	Consider a social  dynamics $A$ and two agents $i,j\in\{1,\ldots,N\}$ with observation matrices $H_i,H_j$ of social as described in \eqref{eq_sys1}-\eqref{eq_H_i}, respectively. Additionally, let $H_l$, $l\in\{i,j\}$ be a canonical vector in the form
	\begin{align}	\nonumber
		H_l =( \underbrace{\overbrace{0;\dots;0}^{l-1};1;0;\dots;0}_{n}).
	\end{align}	
	The two agents are \emph{observationally equivalent} if $\mbox{rank}(\mc{O}(A,H_i))=\mbox{rank}(\mc{O}(A,H_j))$, where  
	\begin{align}\nonumber
		\mc{O}(A,H_l) = \left(
		\begin{array}{c}
			H_l\\
			H_lA\\
			H_lA^2\\
			\vdots\\
			H_lA^{n-1}
		\end{array}
		\right)
	\end{align} 
	is the observability matrix associated with agent $l\in\{i,j\}$.\hfill $\circ$
\end{definition}

Note that the graph $\mc{G}_A=(\mc{V}_A,\mc{E}_A;A)$, where nodes $\mc{V}_A=\{1,\ldots,n\}$ denote the individuals, and the links $\mc{E}_A\subset\mc{V}_A \times \mc{V}_A$ denote relationships or social interactions such that $(i,j)\notin \mathcal{E}_A$ if and only if $a_{ji}= 0$. Furthermore, $\mc{G}_A$ can be decomposed into strongly connected components (SCCs), i.e., the component or subgraph $\mc{S}_i=(\mc{V}_{S_i},\mc{E}_{S_i})$ in which there is a directed path with links in $\mc{E}_{S_i}$ from node $i$ to $j$ for any pair $i,j \in \mc{V}_{S_i}$. Besides, we can have a partial order between different SCCs
as \textit{parent} and \textit{child} \cite{godsil,camsap11}. 
The SCC $\mc{S}_i$ is called parent \cite{camsap11}  if it has no outgoing link to the nodes in $\mc{V}_A\backslash\mc{S}_i$, otherwise it is called a child SCC -- see illustration in Fig.~\ref{fig_parentchild}. A similar classification of non-root/root SCCs is given in \cite{liu-pnas}.

\begin{figure} 
	\centering
	\includegraphics[width=3.5in]{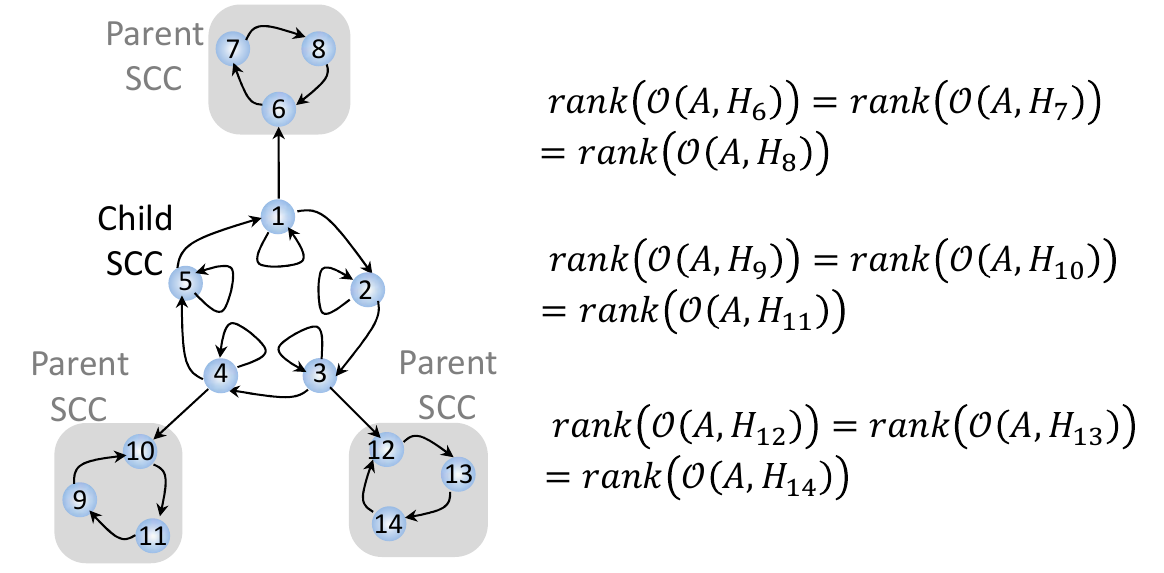}
	\caption{This figure shows a sample academic social graph $\mc{G}_A$ (which is not strongly connected) and its parent-child SCC classification. The SCCs with no outgoing links (to other SCCs) are called the parent, and the other remaining SCCs are called the child. The outputs of state nodes in the same parent SCC are observationally equivalent, i.e., the rank of observability matrix given the pair of system matrix $A$ and output $H_i$ (of state node $i$) are equal for all the nodes in the same parent SCC.
	} \label{fig_parentchild}
\end{figure}

\begin{ass} \label{ass_scc}
	It is assumed that every parent SCC includes at least two state nodes, i.e., there is no SCC with a single node. This is because, in the context of social networks, it is unlikely that a single individual only provides input to some of the others but receives information from none.
\end{ass}

\begin{thm}\label{lem:obsRecov} 
	Consider a social network under Assumption~\ref{ass_scc}, $\mc{G}_A=(\mc{V}_A,\mc{E}_A;A)$ with set of nodes $\mc{V}_A$ and links $\mc{E}_A$  associated with the (full-rank) social dynamics $A$ and two observationally equivalent agents $\alpha,\beta\in\{1,\ldots,N\}$ with canonical observation vectors $H_\alpha$, $H_\beta$ measuring any two nodes in the same parent SCC $\mc{S}_i$ in $\mc{G}_A$, for  $i\in\{1,\ldots,p\}$, where $p\in\mathbb{N}$ is the number of parent SCCs. If there are at least two observationally equivalent agents for every parent SCC $\mc{S}_i$ and the multi-agent network $\mc{G}_{W_{-f}}$ remains strongly connected after an agent $f$ fails, then $(W_{-f} \otimes A, D_H)$ is observable under the failure of any one observationally equivalent agent. 
\end{thm}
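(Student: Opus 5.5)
The proof reduces the failure case to the structural distributed-observability argument behind Lemma~\ref{lem_stablenodelay}, i.e., \cite[Theorem~4]{tsipn}, applied to the post-failure pair. By that result, observability of a Kronecker-product pair $(W\otimes A, D_H)$ with $A$ full-rank and $W$ self-damped reduces to two generic conditions. The first is \emph{rank}: the system matrix is structurally full-rank, i.e., it has a maximum matching covering all nodes. The second is \emph{output-connectivity}: every state node of the composite graph has a directed path to some measured node. I would verify these two conditions separately for $(W_{-f}\otimes A, D_{H-f})$, where $f$ is one of the observationally equivalent agents, say $f=\alpha$, measuring a node in some parent SCC $\mc{S}_i$.

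\textbf{Rank.} Removing agent $f$ deletes one row/column of $W$. The remaining diagonal entries $w_{jj}\neq 0$ still supply self-loops for every surviving agent. Together with Assumption~\ref{ass_fullRank}, i.e., the self-loops $a_{ii}\neq0$, this gives a perfect matching of $W_{-f}\otimes A$: pair each composite node $(j,l)$ with itself. Hence $W_{-f}\otimes A$ is (generically) full-rank, exactly as in the proof of Lemma~\ref{lem_stablenodelay}. This step is routine.

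\textbf{Output-connectivity.} For a single agent measuring one node of $\mc{G}_A$, the rank of $\mc{O}(A,H_l)$ depends structurally on which nodes can reach the measured node. In $\mc{G}_A$ every node has a path into at least one parent SCC, since following outgoing links through the SCC partial order ends in an SCC with no outgoing links. Thus the set of measured parent SCCs determines output-connectivity of $\mc{G}_A$. By hypothesis each parent SCC $\mc{S}_i$ is measured by at least two observationally equivalent agents, $\alpha$ and $\beta$. After $\alpha$ fails, $\beta$ still measures a node of $\mc{S}_i$. Strong connectivity of $\mc{S}_i$, which is nontrivial by Assumption~\ref{ass_scc}, ensures every node of $\mc{S}_i$ reaches $\beta$'s measured node. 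So every node of $\mc{G}_A$ still reaches a measured node, and $\mbox{rank}\,\mc{O}(A,H_\beta)=\mbox{rank}\,\mc{O}(A,H_\alpha)$ by Definition~\ref{def_obsrv}. I then lift this to the composite graph. A composite node $(j,l)$ with $j\neq f$ moves within the copy $j$ of $\mc{G}_A$ to a node of $\mc{S}_i$. It then uses the Kronecker edges induced by $W_{-f}$, which are available because $\mc{G}_{W_{-f}}$ is strongly connected and each agent carries self-loops. These let it reach the copy of $\mc{S}_i$ in agent $\beta$'s layer and hence $\beta$'s measured node. Therefore $(W_{-f}\otimes A, D_{H-f})$ is output-connected.

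Combining the two conditions via \cite[Theorem~4]{tsipn} yields observability of $(W_{-f}\otimes A, D_H)$, with the failed agent's block removed. The main obstacle is the lifting in the output-connectivity step. A path in the Kronecker product $\mc{G}_W\otimes\mc{G}_A$ must advance in both factors simultaneously. Separate paths in $\mc{G}_{W_{-f}}$ and $\mc{G}_A$ therefore do not automatically combine unless they can be padded to equal length. I would handle this with the self-loops: $w_{jj}\neq0$ lets the $\mc{G}_A$-coordinate move while the agent coordinate stays put, and $a_{ll}\neq0$ lets the agent coordinate move while the social coordinate stays put. Any pair of walks can then be synchronized. This is precisely the self-damping hypothesis invoked in Lemma~\ref{lem_stablenodelay}, so the argument reduces to checking that it survives deletion of row/column $f$, which it does.
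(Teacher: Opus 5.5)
Your route is sound in outline but differs from the paper's. The paper never works in the Kronecker product graph. It builds the ``network-of-networks'' graph $\mathcal{G}$, consisting of $\mathcal{G}_A$, $\mathcal{G}_W$, and measurement edges from each measured social node to its agent. It then quotes \cite{camsap11} for the criterion that $(W\otimes A,D_H)$ is observable iff every agent is reachable in $\mathcal{G}$ from every social node. Since $\mathcal{G}_W$ is strongly connected, and hence the unique parent SCC of $\mathcal{G}$, it suffices to have one edge from each parent SCC of $\mathcal{G}_A$ into $\mathcal{G}_W$. Two observationally equivalent agents per parent SCC supply two such edges, and one survives the failure. In that cascade picture no synchronization of walks is needed, because the Kronecker coupling is hidden inside the cited criterion. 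You instead check structural rank and output-connectivity directly on the product graph of $\mathcal{G}_{W_{-f}}$ and $\mathcal{G}_A$, as in the proof of Lemma~\ref{lem_stablenodelay}. This is more self-contained and shows exactly where self-damping is used. The cost is the walk-padding step, which essentially re-derives the output-connectivity part of \cite[Theorem~4]{tsipn}. The combinatorial core is the same in both: every parent SCC keeps a surviving measuring agent, and strong connectivity of $\mathcal{G}_{W_{-f}}$ propagates it.

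One step needs repair. You use $a_{ll}\neq 0$ for every $l$ twice: to match each composite node $(j,l)$ with itself, and to let the agent coordinate move while the social coordinate stays put. You attribute this to Assumption~\ref{ass_fullRank}, but that assumption only says $A$ is full-rank, so zero diagonal entries are allowed. Indeed, in the paper's own example the SCCs containing the measured nodes $7,9,12$ are $3$-cycles with zero diagonal. Both steps go through using only $w_{jj}\neq 0$.
\begin{itemize}
\item \emph{Rank:} $\det A\neq 0$ gives a permutation $\sigma$ with $a_{l\sigma(l)}\neq 0$ for all $l$. The entries $w_{jj}a_{l\sigma(l)}$ at positions $((j,l),(j,\sigma(l)))$ then form a perfect matching of $W_{-f}\otimes A$.
\item \emph{Lifting:} every node of a parent SCC $\mathcal{S}$ has an out-neighbour inside $\mathcal{S}$. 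This follows from Assumption~\ref{ass_scc}, or from the fact that its column of the full-rank $A$ is nonzero while $\mathcal{S}$ has no outgoing links. So each step $j_t\to j_{t+1}$ of a path in $\mathcal{G}_{W_{-f}}$ can be paired with a step inside $\mathcal{S}$. Once in the target layer $j'$, the self-loop $w_{j'j'}\neq 0$ lets you walk inside $\mathcal{S}$ to the measured node.
\end{itemize}
This is where Assumption~\ref{ass_scc} genuinely enters. In your write-up it is only cited for reachability within $\mathcal{S}_i$, which strong connectivity already gives.

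A smaller point: a composite node $(j,l)$ should be routed to whichever parent SCC $l$ reaches, which need not be the $\mathcal{S}_i$ containing the failed agent's node. This is harmless, since every parent SCC retains a surviving measuring agent.
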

\begin{proof}
	Let $H_\mathcal{J}$ with $\mathcal J \in\{1,\ldots,N\}$ and $\mathcal J \neq \emptyset$ be a $n\times |\mathcal J|$ observation matrix consisting of the concatenation of $|\mathcal J|$ canonical observation vectors of agent $\gamma$, with $\gamma\in\mathcal J$. Additionally, let $\mc{G}=(\mc{V}_A\cup\mc{V}_W ,\mc{E}_A\cup\mc{E}_W\cup\mc{E}_{H_{\mathcal{J}}})$ be the overall graph comprising the social and multi-agent network, where $\mc{E}_{H_{\mathcal{J}}}$ describe the edges $(v^A_l,v^W_l)\in\mc{V}_A\times\mc{V}_W$ for agent $l\in \mathcal J$.

	From Theorem~\ref{thm_tau*} and Lemma~\ref{lem_stablenodelay}, it follows that we need to guarantee that $(A,H_{\mathcal{J}})$ is observable, and $\mc{G}_W$ is strongly connected. Besides, invoking \cite[Theorem~1]{camsap11} and results in \cite{alexandru2017limited}, it follows that a  necessary
	and sufficient condition for $(W\otimes A, D_{H_{\mathcal J}})$-observability is that, in $\mc{G}$,  for every agent $i \in \mc{V}_W$, there must exist a directed path from any $x_j \in \mc{V}_A$. 
	
	Now, notice that by construction, the above conditions hold as long there is an edge from a node in each of the parent SCCs  $\mc{S}_i$ in $\mc{G}_A$, to a node in $\mc{V}_W$, since  $\mc{G}_W$ is strongly connected; in fact it readily follows that  $\mc{G}_W$ is the only parent SCC in $\mc{G}$ when $(A,H_{\mathcal{J}})$ is observable. Subsequently, having two observationally equivalent agents $\alpha$ and $\beta$ for each $\mc{S}_i$ in $\mc{G}_A$ guarantees that two edges exist from $\mc{S}_i$ to $\mc{G}_W$. Hence, if the measurement of an observationally equivalent state in $\mc{S}_i$ fails, there remains one edge from $\mc{S}_i$ to $\mc{G}_{W_{-f}}$, which upholds the required conditions for the observability of $(W_{-f}\otimes A, D_{H_{\mathcal J}})$.  
\end{proof}

\begin{remark} From the proof of Lemma~\ref{lem:obsRecov}, it readily follows that we can ensure arbitrary resilience with respect to $p$-failures of observationally equivalent agents as long as at least one agent persists for each of the parent SCCs $\mc{S}_i$ in $\mc{G}_A$, and the remaining agents form a strongly connected multi-agent network.   \hfill $\circ$ 
\end{remark}

\begin{remark}
	There exist scalable and computationally efficient algorithms to design $p$-node-connected networks $\mc{G}_W$ that preserve strong-connectivity after the failure of up to $p$ nodes/agents. This is referred to as network augmentation or survivable network design. Interested readers may refer to \cite{augment_book,vegh2010connectivity,frederickson1981approximation,eurasip} for information on such network design algorithms. 
	\hfill $\circ$
\end{remark}

Theorem~\ref{lem:obsRecov} is illustrated in Fig.~\ref{fig_parentchild}. Note that this strategy is entirely graph-theoretic, i.e., irrespective of the numerical values of system parameters.
 
\begin{remark}
The SCC-decomposition and parent-child classification of the social network for observability recovery can be done via the depth-first search (DFS) algorithm with \mbox{polynomial-order} complexity $\mc{O}(N^2)$ \cite{algorithm}. The polynomial-order complexity implies that the algorithm is computationally efficient and scalable for large-scale social network applications. There are also distributed methods to find the SCCs \cite{9904000} and algorithms to find the minimum number of links to ensure contained SCC \cite{sp}.
\end{remark}

\section{Simulations} \label{sec_sim}

We illustrate the results of the paper in two different settings: \emph{(i)} a pedagogical example (Section~\ref{sec:sim1}), and \emph{(ii)} an illustrative real-world network example (Section~\ref{sec:sim2}).

\subsection{Pedagogical Example}\label{sec:sim1}

Consider the network depicted in Fig.~\ref{fig_parentchild} as $\mc{G}_A$. This social system includes the states (belief, opinion) of $14$ individuals, where its social digraph includes $4$ SCCs ($3$ parent SCCs and $1$ child SCC). In every SCC, the individuals make a social friendship circle. The entries $a_{ij}$ of the social system matrix (i.e., the weights of the associated links) imply the weight factor every individual puts on the state of the other individuals in the social digraph. These weights are set randomly as follows. 

\tiny\begin{align}\nonumber
	\left(
	\begin{array}{cccccccccccccc}
		0.83&    0.98&   0   & 0 & 0 &0.56&  0 & 0& 0 & 0& 0 & 0 & 0 & 0 \\
		0  &  0.73&  0.95&  0 & 0&0  & 0   & 0 & 0  & 0 & 0 & 0  &0  & 0\\
		0&0 &0.60& 0.98&  0 & 0 & 0 & 0 &0 & 0 & 0& 0.46& 0 & 0\\
		0 & 0  & 0 & 0.61& 0.20& 0 & 0 & 0 & 0 &0.80&  0 &0 & 0 & 0\\
		0.11& 0 & 0  & 0 & 0.13& 0 & 0 & 0  & 0 & 0 & 0 & 0 & 0 & 0 \\
		0& 0 & 0& 0 & 0 & 0 & 0.1&  0 &  0&0  &  0 &  0& 0 & 0 \\
		0 & 0 & 0 & 0&  0 & 0& 0 &1&  0 & 0 & 0 & 0  & 0&  0\\
		0 &0 & 0 & 0&  0& 0.22&  0& 0&  0 &  0& 0 &  0 &  0 & 0\\
		0 & 0 & 0 & 0 &0 & 0& 0& 0 & 0 & 0.16&  0 &  0&  0 &  0\\
		0&0 & 0& 0 & 0 &0& 0& 0 & 0 & 0 & 0.31&  0 &  0  &  0\\
		0 & 0 &0 & 0 & 0& 0& 0& 0 &1.04& 0 & 0 & 0 & 0 & 0\\
		0 & 0& 0 & 0 & 0 & 0  & 0 & 0& 0& 0& 0&  0 & 0.87& 0\\
		0 & 0& 0& 0& 0&0 & 0 & 0& 0 & 0 & 0& 0 & 0& 0.41\\
		0 & 0 & 0 & 0 & 0 & 0 & 0&  0&  0&  0 &  0 &0.26& 0 &  0								
	\end{array}	
	\right),
\end{align}\normalsize
The weights are such that the social dynamics is potentially unstable with $\rho(A)=1.086$. Three agents track the state of three individuals $\{7,9,12\}$ (one state node in every parent SCC) and estimate the state of the other unobserved individuals by sharing data over a simple directed cycle over time.
The noise parameters are set as  $\nu_k^i \sim \mc{N}(0,0.05)$ and $\zeta_k^i \sim \mc{N}(0,0.05)$. The maximum delay bounds are considered as $\overline{\tau} \in\{ 0,3,7,15\}$. Using the LMI design in the appendix the gain matrix ${K}:=\mbox{blockdiag}[K^i]$ with $K^i$ the local gain at agent $i$ follows as:

	\tiny\begin{align}\nonumber
		K^1 =	\left(
		\begin{array}{cccccccccccccc}
			0&    0&   0   & 0 & 0 &0&   0.24& 0 & 0 & 0& 0 & 0 & 0 & 0 \\
			0&    0&   0   & 0 & 0 &0&  0 & 0& 0 & 0& 0 & 0 & 0 & 0 \\
			0&    0&   0   & 0 & 0 &0&   2.07& 0 &0 & 0& 0 & 0 & 0 & 0 \\
			0&    0&   0   & 0 & 0 &0&   0.005& 0 & 0 & 0& 0 & 0 & 0 & 0 \\
			0&    0&   0   & 0 & 0 &0&   0.019& 0 & 0 & 0& 0 & 0 & 0 & 0 \\
			0&    0&   0   & 0 & 0 &0&  0 & 0& 0 & 0& 0 & 0 & 0 & 0 \\
			0&    0&   0   & 0 & 0 &0&  0 & 0& 0 & 0& 0 & 0 & 0 & 0 \\
			0&    0&   0   & 0 & 0 &0&   1.48& 0 & 0 & 0& 0 & 0 & 0 & 0 \\
			0&    0&   0   & 0 & 0 &0&   0.91&  0 &0 & 0& 0 & 0 & 0 & 0 \\
			0&    0&   0   & 0 & 0 &0&   0.002& 0 & 0 & 0& 0 & 0 & 0 & 0 \\
			0&    0&   0   & 0 & 0 &0&   -0.006& 0 & 0 & 0& 0 & 0 & 0 & 0 \\
			0&    0&   0   & 0 & 0 &0&  0 & 0& 0 & 0& 0 & 0 & 0 & 0 \\
			0&    0&   0   & 0 & 0 &0&   -0.13& 0 & 0 & 0& 0 & 0 & 0 & 0 \\
			0&    0&   0   & 0 & 0 &0&   -0.25& 0 & 0 & 0& 0 & 0 & 0 & 0 
		\end{array}	
		\right),
	\end{align}\normalsize

	\tiny\begin{align}\nonumber
		K^2 =	\left(
		\begin{array}{cccccccccccccc}
			0&    0&   0   & 0 & 0 &0&   0& 0 & 0 & 0& 0 & 0 & 0 & 0 \\
			0&    0&   0   & 0 & 0 &0&  0 & 0& 2.32 & 0& 0 & 0 & 0 & 0 \\
			0&    0&   0   & 0 & 0 &0&   0& 0 &0 & 0& 0 & 0 & 0 & 0 \\
			0&    0&   0   & 0 & 0 &0&   0& 0 & 0.009 & 0& 0 & 0 & 0 & 0 \\
			0&    0&   0   & 0 & 0 &0&  0 & 0& 0.033 & 0& 0 & 0 & 0 & 0 \\
			0&    0&   0   & 0 & 0 &0&  0 & 0& 0.052 & 0& 0 & 0 & 0 & 0 \\
			0&    0&   0   & 0 & 0 &0&   0& 0 &0.045   & 0& 0 & 0 & 0 & 0 \\
			0&    0&   0   & 0 & 0 &0&   0&  0 &0 & 0& 0 & 0 & 0 & 0 \\
			0&    0&   0   & 0 & 0 &0&   0& 0 & 0 & 0& 0 & 0 & 0 & 0 \\
			0&    0&   0   & 0 & 0 &0&  0& 0 & 1.017 & 0& 0 & 0 & 0 & 0 \\
			0&    0&   0   & 0 & 0 &0&  0 & 0& -0.01 & 0& 0 & 0 & 0 & 0 \\
			0&    0&   0   & 0 & 0 &0&  0& 0 & 1.22 & 0& 0 & 0 & 0 & 0 \\
			0&    0&   0   & 0 & 0 &0&   0& 0 & 0 & 0& 0 & 0 & 0 & 0	\\
			0&    0&   0   & 0 & 0 &0&   0& 0 & 0.004 & 0& 0 & 0 & 0 & 0  
		\end{array}	
		\right),
	\end{align}\normalsize

	\tiny\begin{align}\nonumber
		K^3 =	\left(
		\begin{array}{cccccccccccccc}
			0&    0&   0   & 0 & 0 &0&  0 & 0& 0 & 0& 0 & 0 & 0 & 0 \\
			0&    0&   0   & 0 & 0 &0&  0 & 0& 0 & 0& 0 & 1.24 & 0 & 0 \\
			0&    0&   0   & 0 & 0 &0&  0 & 0& 0 & 0& 0 & 0 & 0 & 0 \\
			0&    0&   0   & 0 & 0 &0&  0 & 0& 0 & 0& 0 & 0.013 & 0 & 0 \\
			0&    0&   0   & 0 & 0 &0&  0 & 0& 0 & 0& 0 & -0.005 & 0 & 0 \\
			0&    0&   0   & 0 & 0 &0&  0 & 0& 0 & 0& 0 & 0 & 0 & 0 \\
			0&    0&   0   & 0 & 0 &0&  0 & 0& 0 & 0& 0 & 0.05 & 0 & 0 \\
			0&    0&   0   & 0 & 0 &0&  0 & 0& 0 & 0& 0 & 0 & 0 & 0 \\
			0&    0&   0   & 0 & 0 &0&  0 & 0& 0 & 0& 0 & 0 & 0 & 0 \\
			0&    0&   0   & 0 & 0 &0&  0 & 0& 0 & 0& 0 & 0.083 & 0 & 0 \\
			0&    0&   0   & 0 & 0 &0&  0 & 0& 0 & 0& 0 & -0.002 & 0 & 0 \\
			0&    0&   0   & 0 & 0 &0&  0 & 0& 0 & 0& 0 & 1.036 & 0 & 0 \\
			0&    0&   0   & 0 & 0 &0&  0 & 0& 0 & 0& 0 & 0 & 0 & 0 \\
			0&    0&   0   & 0 & 0 &0&  0 & 0& 0 & 0& 0 & 0 & 0 & 0 
		\end{array}	
		\right),
	\end{align}\normalsize

For the designed inference scheme, the closed-loop error dynamics satisfies the spectral properties given in Table~\ref{tab_rho2}. Following Theorem~\ref{thm_tau*}, it readily follows that~\eqref{eq_tau*0} holds, which implies steady-state stable error dynamics. To verify this, the mean-squared estimation error is shown in Fig.~\ref{fig_error1} for a directed cyclic multi-agent network. The simulation is performed and averaged for $20$ independent Monte-Carlo trials. Following Remark~\ref{rem_phi} and Eq.~\eqref{eq_phi}, the upperbound on the error variance in steady-state can be approximated as $0.113$.

\begin{table}[hbpt!]
	\centering
	\caption{The spectral radius of the closed-loop observer for the multi-agent network in Fig.~\ref{fig_parentchild} under different bounds on the time-delays.}
	\begin{tabular}{|l|c|c|c|c|}
		\hline
		$\overline{\tau}$ & 0 & 3 & 7 & 15  \\
		\hline
		$\rho(\underline{\widehat{A}})$ & 0.871 & 0.970 & 0.982 & 0.990 \\	\hline 	\hline
	\end{tabular}
	\label{tab_rho2}
\end{table}

\begin{figure} 
	\centering
	\includegraphics[width=3in]{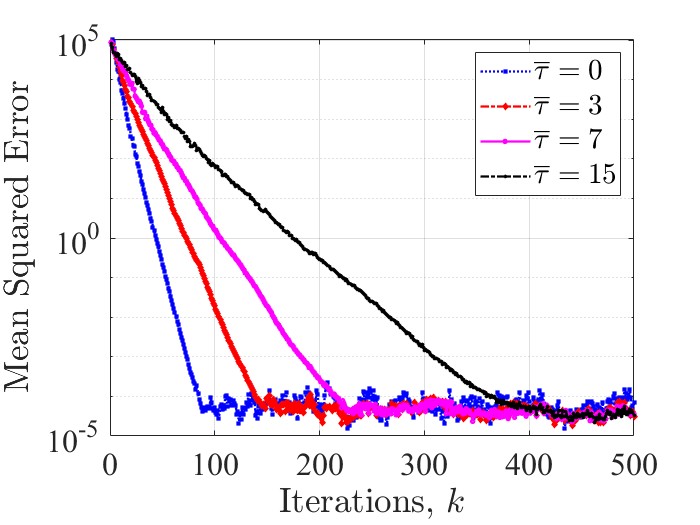}
	\caption{The time-evolution of the mean-squared error for distributed inference of the social network example in Fig.~\ref{fig_parentchild}. The simulation is performed and averaged for $20$ independent Monte-Carlo trials. 
	} \label{fig_error1}
\end{figure}

Next, we consider information loss at two agents with failed observations of states $\{7,9\}$. Following Definition~\ref{def_obsrv} and Theorem~\ref{lem:obsRecov}, we substitute these two by observationally equivalent agents observing states $\{8,10\}$ to recover the loss of distributed observability. For this case, the results associated with the closed-loop observer are given in Table~\ref{tab_rho3}. The mean-squared estimation error is shown in Fig.~\ref{fig_error2}. The simulation is performed and averaged for $20$ independent Monte-Carlo trials. Following Remark~\ref{rem_phi} and Eq.~\eqref{eq_phi}, the upperbound on the error variance in steady-state can be approximated as $0.144$. It readily follows from Theorem~\ref{thm_tau*}, and it is clear from Fig.~\ref{fig_error2} that the estimation error is Schur stable, implying that the distributed observability is recovered by the new setup (even in the presence of time-delays).

\begin{table}[hbpt!]
	\centering
	\caption{The spectral radius of the closed-loop observer for the recovered multi-agent network under different bounds on the time-delays.}
	\begin{tabular}{|l|c|c|c|c|}
		\hline
		$\overline{\tau}$ & 0 & 3 & 7 & 15  \\
		\hline
		$\rho(\underline{\widehat{A}})$ & 0.851 & 0.960 & 0.980 & 0.989 \\	\hline 	\hline
	\end{tabular}
	\label{tab_rho3}
\end{table}

\begin{figure} 
	\centering
	\includegraphics[width=3in]{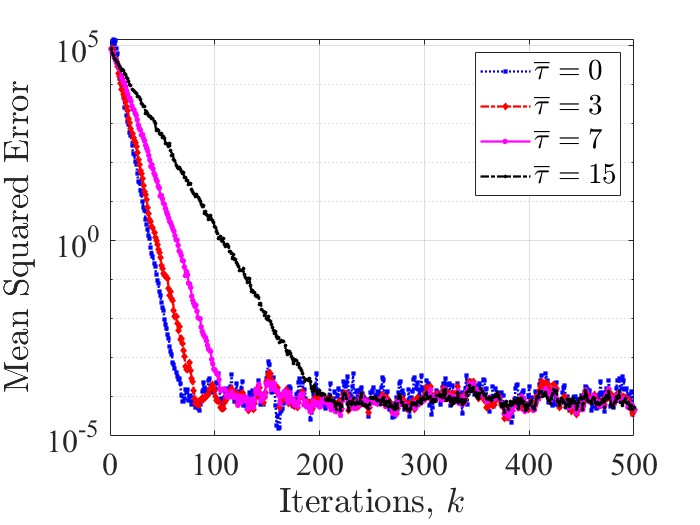}
	\caption{The time-evolution of the distributed inference mean-squared-error for the recovered multi-agent network. The simulation is performed and averaged for $20$ independent Monte-Carlo trials.
	} \label{fig_error2}
\end{figure}

\subsection{Sampson's Monastery Network}\label{sec:sim2}

Next, we consider Sampson's Monastery network given in \cite{sampson} for $\mc{G}_A$. This is a directed social network describing social relations among a group of $n=18$ men (novices) who were preparing to join a monastic order. The network includes $88$ social links with the weights set randomly such that the dynamics are unstable with $\rho(A)=1.1$. A directed cycle multi-agent network $\mc{G}_W$ of $N=4$ nodes is considered to track and infer this social network in a distributed way. The observation matrix $H_{\mathcal J}$ is defined by outputs from social states $\{1,5,12,18\}$. 

First, we assume that the multi-agent network is delay-free, and next, we extend to inference subject to heterogeneous time-delays bounded by different values of $\overline{\tau} = 5,10,20$. We randomly set the initial states of the agents. The noise variables are set as (Gaussian distribution) $\nu_k^i \sim \mc{N}(0,0.1)$ and $\zeta_k^i \sim \mc{N}(0,0.1)$. The block-diagonal feedback gain $K$ is designed via the methodology in the Appendix. Using this $K$, the closed-loop (delay-free) matrix satisfies Schur stable error dynamics as stated in Table~\ref{tab_rho1}, which provides evidence of the result in Theorem~\ref{thm_tau*}. This is better illustrated by the time-evolution of the mean-squared estimation error (at all agents) as shown in Fig.~\ref{fig_error_sampson}. The simulation is performed and averaged for $50$ independent Monte-Carlo trials. Following Remark~\ref{rem_phi} and Eq.~\eqref{eq_phi}, the upperbound on the error variance in steady-state can be approximated as $0.213$.
It is worth mentioning that if $\mc{G}_W$ is not strongly connected, the LMI optimization \eqref{eq_min} may have no solution, i.e., no block-diagonal gain matrix $K$ may exist to stabilize the error dynamics. 

\begin{table}[hbpt!]
	\centering
	\caption{The spectral radius of the closed-loop observer for the Sampson's network example under different bounds on the  time-delays}
	\begin{tabular}{|l|c|c|c|c|}
		\hline
		$\overline{\tau}$ & 0 & 5 & 10 & 20  \\
		\hline
		$\rho(\underline{\widehat{A}})$ & 0.461 & 0.879 & 0.932 & 0.964 \\		\hline 		\hline
	\end{tabular}
	\label{tab_rho1}
\end{table}

\begin{figure} 
	\centering
	\includegraphics[width=3in]{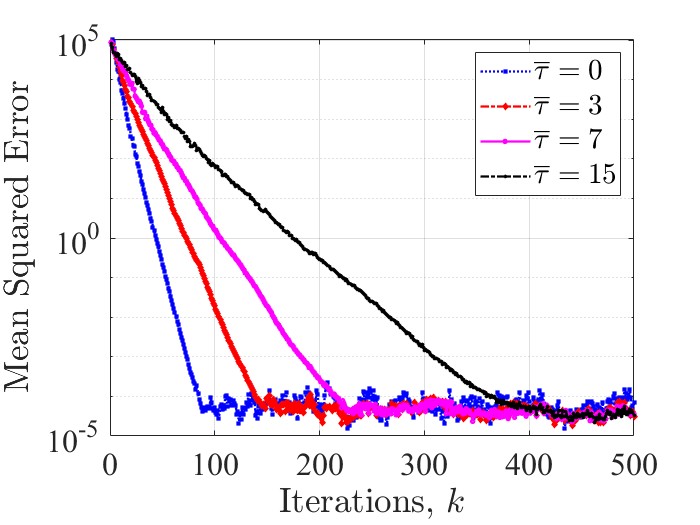}
	\caption{The time-evolution of the mean-squared error for distributed inference of Sampson's social network. The simulation is performed and averaged for $50$ independent Monte-Carlo trials.
	} \label{fig_error_sampson}
\end{figure}

\section{Concluding Remarks}  \label{sec_con}
This paper proposed a distributed scheme for inference on potentially unstable social dynamical networks using multi-agent systems, specifically designed to address challenges such as latency and agent failure. Our contributions include the development of a single-time-scale distributed inference model that effectively handles these challenges. Specifically, we proposed easily verifiable sufficient conditions that ensure the stability of the proposed scheme, even under the presence of heterogeneous bounded time-invariant delays. Additionally, we developed a computationally efficient recovery mechanism for agent failures that relies on employing graph-theoretic approaches to restore network observability by replacing failed agents (due to sensing failure or unbounded delays, i.e., packet drops) by implementing computationally efficient graph-theoretic methods to assign observationally equivalent agent counterparts.
The effectiveness and adaptability of our proposed model were demonstrated through both pedagogical examples and real-world network applications.

In future research, we aim to establish necessary and sufficient conditions for the stability of our proposed scheme. Developing distributed estimation protocols resilient to \textit{time-varying} delays is a future research direction. An extension that further considers model nonlinearities, e.g., polarization and epidemic spreading, is another future research direction. We also plan to explore various guarantees for the communication network within the multi-agent system to enhance its resilience. This will facilitate extending our methodology to distributed fault detection and isolation, as well as attack-detection scenarios. Additionally, we intend to develop practical stability guarantees that will enable the implementation of control schemes with assured performance. These advancements can significantly improve coordination practices to handle social unrest.

\section*{Acknowledgements}
This work is funded by Semnan University, research grant No. 226/1403/1403214.

\section*{Appendix: Block-Diagonal Feedback Gain Design}
The block-diagonal feedback gain matrix~$K$ in \eqref{eq_p}-\eqref{eq_m}
is the solution to the following \emph{linear matrix inequality}~(LMI):
\begin{equation}\label{LMI-9}
	\begin{aligned}
		& ~~ \left( \begin{array}{cc} Q&\widehat{A}^\top Q\\ Q\widehat{A}&Q\\ \end{array} \right) \succ 0\Rightarrow \left( \begin{array}{cc} Q&\widehat{A}^\top\\ \widehat{A}&R\\ \end{array} \right) \succ 0.
	\end{aligned}
\end{equation}
Recall that $\widehat{A} = W \otimes A -K D_H (W\otimes A)$ and ~$Q \succ 0$ (i.e., they must be positive definite). Note that the left equation above is nonlinear in~$K$; therefore, we adopt its linear equivalent proposed in \cite{pang:95,5717159} (the right equation in \eqref{LMI-9} with~$Q=R^{-1}$). Next, knowing that~$Q=R^{-1}$ is a non-convex constraint, it can be approximated by the following optimization problem \cite{rami:97}:
\begin{equation}
	\begin{aligned}
		\displaystyle
		\min
		~~ &  \mathbf{trace}(QR) \\
		\text{s.t.} ~~ & \left( \begin{array}{cc} Q&I\\ I&R\\ \end{array} \right) \succ 0,~Q,R \succ 0\\
		~~ & K\mbox{~is~block-diagonal}.\\
	\end{aligned}
\end{equation}
To summarize, knowing that distributed $(W\otimes A, D_H)$-observability holds, a feasible $K$ is the solution to the following optimization problem:
\begin{equation} \label{eq_min}
	\begin{aligned}
		\displaystyle
		\min
		~~ &  \mathbf{trace}(QR) \\
		\text{s.t.}  ~~& Q,R\succ 0,\\ ~~ & \left( \begin{array}{cc} Q&\widehat{A}^\top\\ \widehat{A}&R\\ \end{array} \right) \succ 0,~ \left( \begin{array}{cc} Q&I\\ I&R\\ \end{array} \right) \succ 0,\\
		~~ & K\mbox{~is~block-diagonal}.\\
	\end{aligned}
\end{equation}
Note that the solution to the second LMI is equivalent to~$Q=R^{-1}$, resulting in the optimal trace~$nN$. Replacing the term~$\mathbf{trace}(QR)$ with linear approximation~$\mathbf{trace}(\frac{R_0Q+Q_0R}{2})$ \cite{rami:97}, we can consider  Algorithm~\ref{alg_ac} to optimize \eqref{eq_min}. It is important to notice that \cite{rami:97} proves that~$\mathbf{trace}(R_tQ + Q_tR)$ is non-increasing and converges to~$2nN$. Therefore, the terminating criterion is set as reaching within~$2nN + \epsilon$ of the
trace objective. See~\cite{usman_cdc:11,pang:95,5717159,rami:97} for more details.
\begin{algorithm}
	\caption{The Iterative Algorithm to Design $K$}
	\begin{algorithmic}[1]
		\State \textbf{Input:}  $A,W,D_H$\;
		\State \textbf{Initialization:}  feasible~$X_0,Y_0,K$
		\State \textbf{While} {$\rho (\widehat{A}) \geq 1$} \textbf{do}
		{\State Minimize~$\mathbf{trace}(R_tQ + Q_tR)$ subject to constraints in~\eqref{eq_min} and find~$Q,R,K$\;
			\State		$Q_{t+1}=Q$\;
			\State		$R_{t+1}=R$\;
			\State		$t=t+1$\;}	
	\end{algorithmic}
	\label{alg_ac}
\end{algorithm}

The iterative convergence rate of the above algorithm depends on the size of the system and sparsity of $W$, i.e., for \textit{sparse} strongly-connected networks $\mc{G}_W$ it may take more iterations for Algorithm~\ref{alg_ac} to converge.

\bibliographystyle{elsarticle-num} 
\bibliography{bibliography}

\end{document}